\def\<{\langle}
\def\>{\rangle}
\newcommand{\be}{\begin{eqnarray} \begin{aligned}}
\newcommand{\ee}{\end{aligned} \end{eqnarray} }
\newcommand{\benn}{\begin{eqnarray*} \begin{aligned}}
\newcommand{\eenn}{\end{aligned} \end{eqnarray*} }
\newcommand{\ben}{\begin{eqnarray} \begin{aligned}}
\newcommand{\een}{\end{aligned} \end{eqnarray} }
\newcommand{\bc}{\begin{center}}
\newcommand{\ec}{\end{center}}
\newcommand{\id}{\mathbb{I}}
\newcommand{\tr}{\mathop{\mathsf{tr}}\nolimits}
\newcommand{\e}{\mathrm{e}}
\newcommand{\beq}{\begin{eqnarray} \begin{aligned}}
\newcommand{\eeq}{\end{aligned} \end{eqnarray} }
\newcommand{\bea}{\begin{array}}
\newcommand{\eea}{\end{array}}
\newcommand{\bee}{\begin{enumerate}}
\newcommand{\eee}{\end{enumerate}}
\newcommand{\bei}{\begin{itemize}}
\newcommand{\eei}{\end{itemize}}
\newtheorem{theorem}{Theorem}
\newtheorem{lemma}[theorem]{Lemma}
\newtheorem{result}{Result}
\def\id{\mathbb{I}}
\def\01{\{0,1\}}
\newcommand{\ket}[1]{|#1\rangle}
\newcommand{\bra}[1]{\langle#1|}
\newcommand{\proj}[1]{|#1\rangle\!\langle#1|}
\newcommand{\ketbra}[2]{|#1\rangle\!\langle#2|}
\newcommand{\braket}[2]{\langle #1|#2\rangle}
\newcommand{\supl}{Appendix}
\def\<{\langle}
\def\>{\rangle}
\newtheorem*{rep@theorem}{\rep@title}
\newcommand{\newreptheorem}[2]{%
\newenvironment{rep#1}[1]{%
 \def\rep@title{#2 \ref{##1} (restatement)}%
 \begin{rep@theorem}}%
 {\end{rep@theorem}}}
\def\e{\mathrm{e}}
\def\bat{\eta}
\newcommand{\alv}[1]{{ #1}}
\begin{document}

\title{Entanglement fluctuation theorems}

\author{\'{A}lvaro M. Alhambra}
\affiliation{Department of Physics and Astronomy, University College London, Gower Street, London WC1E 6BT, United Kingdom}
\affiliation{Perimeter Institute for Theoretical Physics, Waterloo, ON N2L 2Y5, Canada}
\author{Lluis Masanes}
\affiliation{Department of Physics and Astronomy, University College London, Gower Street, London WC1E 6BT, United Kingdom}
\author{Jonathan Oppenheim}
\affiliation{Department of Physics and Astronomy, University College London, Gower Street, London WC1E 6BT, United Kingdom}
\author{Christopher Perry}
\affiliation{QMATH, Department of Mathematical Sciences, University of Copenhagen, Universitetsparken 5, 2100 Copenhagen, Denmark}

\begin{abstract}
 Pure state entanglement transformations have been thought of as irreversible, with reversible transformations generally only possible in the limit of many copies.
 Here, we show that reversible entanglement transformations do not require processing on the many copy level, but can instead be undertaken on individual systems, provided the amount of entanglement which is produced or consumed is allowed to fluctuate.
 We derive necessary and sufficient conditions for entanglement manipulations in this case. As a corollary, we derive an equation which quantifies the fluctuations of entanglement, which is formally identical to the Jarzynski fluctuation equality found in thermodynamics. One can also relate a forward entanglement transformation to its reverse process in terms of the entanglement cost of such a transformation, in a manner equivalent to the Crooks relation. We show that a strong converse theorem for entanglement transformations is formally related to the second law of thermodynamics, while the fact that the Schmidt rank of an entangled state cannot increase is related to the third law of thermodynamics. 
 Achievability of the protocols is done by introducing an {\it entanglement battery}, a device which stores entanglement and uses an amount of entanglement that is allowed to fluctuate but with an average cost which is still optimal. This allows us to also solve the problem of partial entanglement recovery, and in fact, we show that entanglement is fully recovered. Allowing the amount of consumed entanglement to fluctuate also leads to improved and optimal entanglement dilution protocols.
\end{abstract}
\maketitle

\section{Introduction}

Entanglement is generally regarded as the essential feature of quantum mechanics. Originally introduced by Einstein, Podolsky and Rosen~\cite{epr} to argue that quantum mechanics was not a complete theory of nature, and shown by Bell to not be explainable by any {\it locally realistic} theory~\cite{bell}, it is now regarded as the key resource in quantum information theory. It allows for basic primitives such as teleportation and quantum cryptography, is seen as a key ingredient in the speed-up of quantum computers, and is behind quantum advantages in communication complexity and precision measurements. The pioneering works of quantum information theory sought to quantify entanglement~\cite{BBPS1996,BDSW1996,BBPSSW1996}, and provide conditions for transformations between entangled states using only Local Operations and Classical Communication (LOCC) in the limit of sharing many copies of the same state. In these works, one can think of entanglement as an average quantity, with entanglement manipulations generally only being possible in the limit of sharing many copies of the same quantum state.

This mirrors the early stages of the history of statistical mechanics a century ago, when quantities such as work and heat, while regarded as being a single number,
are really average quantities which only emerge in the thermodynamic limit. Indeed, the analogy between thermodynamics and pure state entanglement transformations was made explicit in \cite{popescu-rohrlich}, as well as in the case of mixed state entanglement manipulations with more limited success~\cite{termo,APE,horodecki_are_2002,brandao2008entanglement,li2017asymptotic}.

At around the same time as the resource theory of entanglement was being developed, the Jarzynski equation, and Crooks relation were discovered. These and other results in Stochastic Thermodynamics \alv{give exact information} about the fluctuations of work and heat about their average values \cite{jarzynski1997nonequilibrium,crooks1999entropy,sagawa2010generalized,seifert2012stochastic}. This raises the question as to whether one can understand entanglement as a fluctuating quantity, from which we can understand some of the present many copy results in entanglement theory as a restriction to the case when we can only compute average quantities. Indeed, it has recently been shown that there is a connection between fluctuation theorems and the majorization condition \cite{alhambra2016fluctuating}, raising the prospect that fluctuation theorems have wider applicability. This is because the majorization criteria and its generalization is known to play an important role in both determining state transformations in thermodynamics \cite{horodecki_reversible_2003,horodecki2013fundamental}, and in single-copy pure state entanglement manipulation \cite{Nielsen-pure-entanglement}.

Here, we will see that we can in fact think of entanglement as a resource whose amount can fluctuate and we derive a fluctuation theorem which quantifies the extent to which it can. In order to do this, we will need to define what we mean by fluctuations of entanglement. In some cases, we may be interested in processes which, with probability $P\left(w\right)$, coherently produce or consume some amount $w$ of maximally entangled pure states (or {\it ebits}). We will find necessary and sufficient conditions that this superposition of entanglement fluctuations has to satisfy. Our conditions apply to all pure state transformations, even those which probabilistically create a pure target state from some ensemble. To achieve the conditions, we introduce the notion of an entanglement battery, which is a system which stores entanglement and introduce and prove the existence of the family of \emph{battery-assisted} LOCC protocols. The operations allowed on such battery can add or consume entanglement from it in a coherent manner, and the necessary and sufficient conditions we derive will characterize this superposition.

In doing so, we find that this entanglement battery can be used to perform tasks which were previously impossible. For example, pure state entanglement transformations, which are generally irreversible at the level of single copies \cite{Nielsen-pure-entanglement}, become reversible. A special case of this reversibility is entanglement {\it concentration} and {\it dilution}, two of the most basic primitives of entanglement theory. In concentration, many copies of a pure state are converted into many maximally entangled states while dilution is the reverse process. Current protocols only work in the asymptotic limit of infinitely many copies, and while current concentration protocols are optimal, dilution protocols are not. For general pure state transformations, the rate of converting $n$ copies of state $\ket{\psi}_{AB}$ into $m$ copies of the state $\ket{\phi}_{AB}$ is given by
\begin{align}
\frac{n}{m}=\frac{S\left(\psi_A\right)}{S\left(\phi_A\right)}
\label{eq:conversionrate}
\end{align}
with  $S\left(\rho\right):=-\tr\rho\log\rho$, $\psi_A=\tr_B\proj{\psi}_{AB}$ and similarly for $\phi_A=\tr_B\proj{\phi}_{AB}$. In this asymptotic limit, state transitions become reversible, but only up to factors of order $\sqrt{n}$.  So, while $\ket{\psi}_{AB}^{\otimes n}\rightarrow \ket{\phi}_{AB}^{\otimes m}$ may be possible by LOCC, it is generally the case that $\ket{\phi}_{AB}^{\otimes m}\rightarrow\ket{\psi}_{AB}^{\otimes n-o(\sqrt{n})}$. Or, to put it another way, in the limit of large $n$, the transition $\ket{\psi}_{AB}^{\otimes n}\rightarrow \ket{\phi}_{AB}^{\otimes n}$ is possible, consuming (producing)  $S(\phi_A)-S(\psi_A)$ maximally entangled states (ebits) on average, while the reverse process $\ket{\phi}_{AB}^{\otimes n}\rightarrow\ket{\psi}_{AB}^{\otimes n}$ is possible with the production (consumption) of the same average number of ebits up to factors of $\frac{\sqrt{n}}{n}$. In the limit of large $n$, the rates for the forward and reverse process are the same, but the difference between the absolute number of ebits used diverges.

Although the average rate of entanglement consumed or produced is given by $S(\phi_A)-S(\psi_A)$, this quantity will fluctuate and in any instance of the protocol, the amount of maximally entangled states which one obtains or consumes will vary. Here, we derive \alv{a number of} fluctuation theorems which exactly characterize these fluctuations. Thus far, the characterization of such fluctuations has been unsolved, with the only progress being that one can obtain the probability of concentrating to $m$ ebits in the regime of infinitely many input states, provided no constraints are put on the rest of the distribution \cite{LoPopsecu1997-beyond}.

Indeed, the result in \cite{LoPopsecu1997-beyond} can be seen as the majorization condition in the special case of the final state being maximally entangled. More generally, Nielsen \cite{Nielsen-pure-entanglement} showed that 
\begin{align}
\ket{\psi}_{AB}\rightarrow \ket{\phi}_{AB}
\label{eq:puretrans}
\end{align}
is possible by LOCC, if and only if, the majorization condition $q\left(\phi\right)\succ p\left(\psi\right)$ holds, i.e. that
\begin{align}
\sum_{j=1}^{k} q_j(\phi) \geq \sum_{i=1}^{k} p_i(\psi),\quad \forall k
\end{align}
with $p_i$, $q_j$ being the eigenvalues of $\psi_A$, $\phi_A$ written in non-increasing order $p_1\geq p_2\geq \dots \geq p_d$. Indeed, the LOCC protocol which achieves any pure state transformation can be taken to consist of a POVM measurement by Alice, followed by a unitary transformation by Bob conditional on the result of Alice's POVM \cite{LoPopsecu1997-beyond}. That majorization is a necessary condition for a pure state transformation suggests that single copy transformations are irreversible, and typically, $q(\phi)\nsucc p(\psi)$ and  $q(\phi)\nprec p(\psi)$ \cite{Nielsen-pure-entanglement}, meaning that no transition can happen in either direction.

Surprisingly, we find this is not the case in the presence of an entanglement battery. We will see that one can perform any pure state entanglement transformation at the single copy level. Furthermore, we see that reversibility on the single-copy level is restored in the presence of an entanglement battery, and that we can exactly characterize the fluctuations of entanglement in the battery. As a result, one does not need an infinite number of copies of the input state to distill entanglement, but instead the copies can be processed on the individual level. This special case is reminiscent of the streaming entanglement distillation protocols introduced in \cite{blume2009streaming}, where processing is done one system at a time, albeit with a quantum memory of order $\log{n}$. In the streaming protocol, ebits are emitted after a lag of $\log{n}$ states have been processed, thus the lag becomes infinite in the limit where we achieve perfect entanglement concentration. Using the entanglement battery, there is no lag, and the processing is truly on the individual copy level.

\alv{The paper is structured as follows. In Section \ref{sec:batteryBLOCC} we define the notion of an entanglement battery and of the operations on it that yield entanglement fluctuations. Then, in Section \ref{sec:results} we state all the main results, which take the form of necessary and sufficient conditions on those fluctuations. We first start from the more general set of them, to them move to more specific constraints akin to the so-called \emph{integral} fluctuation theorems. We finish the section with the analogue of Crooks' theorem for which a reverse process is defined. We finally conclude in Section \ref{sec:conclusion}, where we explain applications of our results to open problems in entanglement theory such as \emph{partial entanglement recovery} or \emph{embezzlement}, and we briefly discuss the experimental feasibility of the protocols. We place the proofs of the main results as well as further details about the setting in the Appendix.}

\section{Entanglement battery and battery-assisted LOCC}\label{sec:batteryBLOCC}

Let us now introduce the notion of an entanglement battery, and define the ways in which one can act on it. \alv{In analogy with the thermodynamic scenario \cite{alhambra2016fluctuating}, the definition of entanglement ``work" will be determined by the restriction we impose on the type of transformations we implement on the battery.}

 Just as an ordinary battery (such as a weight at height $h$) stores energy which can be used to inject or store work in the context of thermodynamics, the entanglement battery can be thought of as a storage device for entanglement. Just as work can be thought of as the change in average energy of the battery or average height of the weight, we will see that von Neumann entropy can be thought of as the change in the average number of maximally entangled states stored in the entanglement battery.

\alv{We are } interested in characterizing the entanglement fluctuations of any state transformation. To do so, let us begin by considering a natural set of battery states $\ket{e_x}$ on a system $A'B'$ given by
\begin{equation}
\label{integer battery}
\ket{e_x}_{A'B'}=\left(\frac{1}{\sqrt{2}}\left(\ket{1}\ket{1}+\ket{2}\ket{2}\right)\right)^{\otimes x}\otimes \left(\ket{0}\ket{0}\right)^{\otimes\left(n-x\right)},
\end{equation}
for some large $n$, i.e. they consist of $x$ ebits, and $n-x$ pure product states. A common subset of LOCC protocols that we want to characterise are those which coherently produce or consume ebits with some probability.

Such protocols can be considered by taking $A'B'$ to be a battery which starts off in state $\ket{e_x}_{A'B'}$, and then, if $w$ ebits of entanglement are added or removed from the battery, $\ket{e_x}_{A'B'}\rightarrow\ket{e_{x+w}}_{A'B'}$ provided $x$ and $n$ are sufficiently large so that we avoid the top and bottom of the battery. We will use the convention that positive $w$ corresponds to gaining entanglement, while negative $w$ corresponds to consuming it. Just as work is the raising and lowering of the weight, here, we want to consider the raising and lowering of the number of ebits by $w$ and we are interested in characterising the fluctuations in $w$ during the pure state LOCC transformation of Eq.~\eqref{eq:puretrans}.

We thus want to consider entanglement gain/consumption, to be the process of raising/lowering the number of ebits in the entanglement battery with the raising/lowering operator defined through $\Delta^w \ket{e_x}_{A'B'}=\ket{e_{x+w}}_{A'B'}$, where $x+w$ is understood modulo $n+1$ to ensure that $\Delta^w$ is a unitary (though we will pick the states on the battery to be such that the top and bottom of the battery are never reached in practice). If initially, the battery is found to be in state $\ket{e_x}_{A'B'}$, then the final superposition over $\ket{e_{x+w}}_{A'B'}$, $\sum_w \sqrt{P\left(w\right)}\ket{e_{x+w}}_{A'B'}$, gives us a probability distribution over entanglement we call $P(w)$. Note also that either Alice or Bob can measure the amount of entanglement in the battery resulting in the state $\ket{e_{x+w}}$ with probability $P\left(w\right)$ and revealing the entanglement loss or gain. In general, we might want $w$ to take on non-integer values, and indeed one can easily consider a set of battery states which allow this, as discussed in Appendix \ref{ss:battery}.


In our protocols, we will consider more general initial battery states of the form
\begin{align}
\ket{\bat}_{A'B'}=\sum_{x=0}^{n} \sqrt{\alpha_x} \ket{e_x}_{A'B'}
\ .
\label{eq:battery}
\end{align}
What we require from the state of our battery $\ket{\bat}_{A'B'}$ is that, for any pure input $\ket\psi$ and output $\ket\phi$ states of the system, the LOCC transformation 
\begin{align}
\ket{\psi}_{AB}\otimes\ket{\bat}_{A'B'} \longrightarrow \sigma_{ABA'B'}\approx \ket{\phi}_{AB}\otimes \ket{\bat'}_{A'B'}
\label{eq:puretransbat}
\end{align}
can be achieved reversibly, with $\ket{\bat'}_{A'B'}$ being a state of the battery which is also useful for further arbitrary entanglement transformations. This is a fairly strong condition, because in order to ensure that the final state of the system is pure, it must be virtually uncorrelated with the battery.

\alv{In fact, we show} that purity of the target state of the system implies that in the limit of ideal transformations, the battery is in a uniform superposition over $\ket{e_x}_{A'B'}$ (see subsection \ref{ss:uniform} of the \supl \,for the proof). 
This is \alv{qualitatively} similar to the case of using a reference frame in order to perform a transformation on pure states which   would otherwise be limited by a conservation law \cite{aharonov-susskind,BRS-refframe-review}. It might appear surprising that one can retain purity on the system, since the battery would appear to become correlated with it. However, as we show in Appendix \ref{ssec:large n} not only can this be done, but also it can be done perfectly, as
\begin{equation}
  \label{eq:purenochange}
  \ket{\psi}_{AB}\otimes\ket{\bat}_{A'B'} \rightarrow \sigma_{ABA'B'}\approx \ket{\phi}_{AB}\otimes \ket{\bat}_{A'B'}
\end{equation}
provided the battery state is chosen to be close to a uniform superposition over sufficiently many eigenstates $\ket{e_x}$, for example, $\alpha_x=\frac{1}{N+1}$ for $x\in\left\{\frac{n-N}{2},\dots,\frac{n+N}{2}\right\}$ with $N$ chosen large enough to obtain the transformation of Eq.~\eqref{eq:purenochange} to any desired accuracy. For general state transformations, we must therefore take the initial state of the battery to be in such a uniform superposition, and the final state of the battery must also be close to this if it is to be used for further arbitrary state transformations.

The raising/lowering maps set the type of transitions that the battery can undergo, and achieving Eq. \eqref{eq:purenochange} for all states is a consequence of this definition. These maps are guaranteed to exist but we note, however, that we have not found whether they can have a fixed form as a LOCC map on the battery that applies the transition of Eq. \eqref{eq:purenochange} universally. This would amount to finding an explicit form of the LOCC protocols that raise and lower the battery with the operator $\Delta$ and that implements any one of the possible state transitions on the system.

That a transformation of the form of Eq.~\eqref{eq:purenochange} is possible might appear paradoxical, since the entanglement in the battery is changing, but the state of the battery barely does. However, large changes in average quantities need not correspond to large changes in the state. In particular, $\bra{\bat} \Delta^w\ket{\bat}_{A'B'}$ is close to $1$ for all $w$, and thus the states of the system will not become correlated with the battery. Nonetheless, the average entanglement of the states $\Delta^w\ket{\bat}_{A'B'}$ and $\ket{\bat}_{A'B'}$ differ by $w$, reflecting the fact that large changes in a system's average observables need not take it to orthogonal states. This is also a property of embezzling states \cite{Hayden-embezzling,harrow_entanglement_2009,leung2008coherent}, although the processes we consider are more general than embezzling transformations, as they generally require classical communication to perform, while embezzlement does not~\cite{Hayden-embezzling}.

A similar phenomenon also occurs in the use of reference frames to maintain coherence. For instance, in \cite{aberg2014catalytic} it was shown that a large uniform superposition over energy levels can be used repeatedly to simulate arbitrary unitary processes on a single small system via energy-preserving interactions. The results here are qualitatively similar, in that we use a large entanglement battery in a uniform superposition to perform transitions that would otherwise be restricted. However, the constraints that must be circumvented in both scenarios are formally very different: here we have to circumvent the majorization constraints imposed by Nielsen's theorem \cite{Nielsen-pure-entanglement}, while in \cite{aberg2014catalytic} one is limited by asymmetry considerations, akin to the WAY theorem \cite{marvian2012information,ahmadi2013wigner} (in which majorization does not appear). 

We will call any LOCC protocol which implements Eq.~\eqref{eq:purenochange}  {\it Battery assisted LOCC}, or {\it BLOCC}, which will be the set of operations allowed in what follows. More precisely, we say that a pure state transformation is possible under BLOCC if there exists a sequence of BLOCC protocols $\Lambda_N$ and states $\ket{\Psi_N}$ and $\ket{\Phi_N}$ such that $\ket{\Psi_N} \rightarrow\ket{\Phi_N}$ under BLOCC and $\lim_{N\rightarrow\infty}\ket{\Psi_N}=\ket{\psi}_{AB}\otimes\ket{\bat}_{A'B'}$ and $\lim_{N\rightarrow\infty}\ket{\Phi_N}=\ket{\phi}_{AB}\otimes\ket{\bat}_{A'B'}$. In what follows, we will state our results in the limit of large $N$, although in our proofs we consider the finite case. 

Although purity of the target state requires a perfectly uniform battery, in Appendix \ref{ssec:large n} we show how to relax this condition \alv{to batteries of finite size. More specifically}, we show how to implement the more general state transformation:
\begin{align} \label{eq:goal}
&\ket{\Psi}_{ABA'B'}:=\sum_{i,x} \sqrt{p_{ix}}\ket{ii}\otimes \ket{e_x} \\ \nonumber &\stackrel{\Lambda}{\longrightarrow} \ket{\Phi}_{ABA'B'}:= \sum_{j,x'} \sqrt{q_{jx'}}\ket{jj}\otimes \ket{e_{x'}},
\end{align}
via a BLOCC protocol $\Lambda$ and with the initial and final state of the system being arbitrarily close to pure. As before, the probability distribution over
entanglement fluctuations can still be quantified by imagining that initially, we could have measured $\ket{e_x}$, and at the end of the process we could have measured  $\ket{e_{x'}}$, and we consider the entanglement fluctuation in the battery to be given by $w=x'-x$ with probability $P\left(w\right)$ \alv{given by the statistics of those measurements}.

Indeed, we can sample from the probability distributions $p_i$, $q_j$, and $P\left(w\right)$, as well as the joint distribution $P(i,j,w)$ as follows: Initially, Bob (or Alice) can measure his (her) state with the POVM $\left\{\ketbra{i}{i}\otimes P_x\right\}$ where $P_x$ is the projector onto the subspace spanned by the reduced state of $\ket{e_x}$. Alice then performs the POVM measurement which would have implemented the transformation of Eq.~\eqref{eq:goal}, and, finally, measures her state with the POVM $\left\{\ketbra{j}{j}\otimes P_{x'}\right\}$. Note that Alice's POVM commutes with Bob's measurement, and Alice's measurement of $x'$ commutes with Bob's measurement of $x$, and so we can compute $P(i,j,w)$. When these initial and final  measurements are performed the actual transformation $\ket{\psi}_{AB}\rightarrow \ket{\phi}_{AB}$ does not happen, but their statistics capture the relevant information of the map $\Lambda$.

\alv{Maybe here we can discuss what the ref 3 suggests}

\section{Results}\label{sec:results}

\alv{We now present the main results, which take the form of constraints on the possible transitions via BLOOC, given by relations between the fluctuations in the battery (labeled by $w$) and the Schmidt coefficients of the initial and final states of the system.}
We are able to prove six results about fluctuations of entanglement and use them to prove the existence of optimal BLOCC protocols for entanglement dilution and partial entanglement recovery. The main result from which the remaining five follow gives a family of necessary and sufficient conditions for state transformations to be possible under BLOCC, which take a similar form to those in \cite{alhambra2016fluctuating} in the context of quantum thermodynamics (a  major difference being the role of initial and final states in the constraints, which is reversed):
\begin{result}[Assisted stochasticity]\label{result:bat-stochastic}
A pure state BLOCC transformation $\ket{\psi}\rightarrow\ket{\phi}$ between states with Schmidt coefficients $p_i$ and $q_j$ and a distribution of maximal entanglement $P\left(w\right)$ coherently consumed or produced in the process, is possible if and only if there exists a conditional probability distribution $P\left(i,w|j\right)$ satisfying the following three conditions:
\begin{equation}
  \sum_{i,w} P\left(i,w|j\right)=1,\quad \forall j.
  \label{eq:normalisation}
\end{equation}
\begin{equation}
  \sum_{j,w} P\left(i,w|j\right)2^w=1,\quad \forall i.
  \label{eq:w-stochasticity}
\end{equation}
\begin{equation}
  \sum_{j,w} P\left(i,w|j\right)q_j=p_i, \quad \forall i.
  \label{eq:doesthetrick}
\end{equation}
\end{result}
We give the proof of the necessity and sufficiency of these conditions in Appendix~\ref{sec:necc} and \ref{sec:suff} respectively. The conditions Eqs.~\eqref{eq:normalisation}-\eqref{eq:doesthetrick} can be thought of as a generalization of the doubly-stochastic conditions for the matrix of $\Lambda$, which is well known to be equivalent to the standard majorization condition on the initial and final state. The appearance of the random variable $w$ reflects the departure from this double-stochasticity (recovered when $w=0$), and thus the non-zero values of $w$ reflect our ability to perform transitions on the system beyond those allowed by the usual majorization constraints.
This again is due to the use of the battery through the BLOCC protocols (which are a subset of all the possible LOCC protocols on system and battery, with the properties outlined in Section \ref{sec:batteryBLOCC}).

Our next result can be derived from the above relations (see Appendix~\ref{sec:2nd eq} for proof) and can be though of as the second law of entanglement:
\begin{result}[The 2nd law equality for entanglement] \label{res:2nd eq}
Given an initial state $\ket{\psi}$ with Schmidt coefficients $p_i$ and a target state $\ket{\phi}$ with coefficients $q_j$, the distribution of entanglement that can be coherently extracted in converting $\ket{\psi}$ into $\ket{\phi}$ under BLOCC satisfies:
\begin{equation} \label{eq:entSLE}
\left\langle 2^{w-\log q_j+\log p_i}\right\rangle=1.
\end{equation}
\end{result}

This equality is akin to recent fluctuation theorems for arbitrary input and output states \cite{funo2013integral,schumacher2011quantum,sagawa2012second,seifert2012stochastic,alhambra2016fluctuating}. 

The next result is a single necessary and sufficient condition for a transformation between states via BLOCC protocols, provided one has access to enough fluctuating entanglement.

\begin{result}[Conditions for state transformations with entanglement fluctuations]\label{res:rev}
 The transformation $\overset{\langle w \rangle}{\ket{\psi}\rightarrow\ket{\phi}}$ is possible under BLOCC, if and only if
\begin{equation} \label{eq:inequality}
	\langle w \rangle \leq S(\psi_A)-S(\phi_A).
\end{equation}
\end{result}
\alv{To prove the necessity, we just need to use Jensen's theorem on Eq.~\eqref{eq:entSLE}, together with the fact that the exponential function  is convex.
Sufficiency follows from setting a particular value to the work fluctuations, $w=\log q_j-\log p_i$ in Eq.~\eqref{eq:w-stochasticity}, which gives a set of work fluctuations that saturates the inequality}. This inequality can be thought of as akin to the traditional second law of thermodynamics, stated that the average work $W$ required in transforming a state $\rho$ into a state $\sigma$ has to satisfy $\left\langle W\right\rangle \leq F(\rho)-F(\sigma)$ with $F(\rho)$ the free energy $F\left(\rho\right)=\langle H\rangle-TS(\rho)$, $T$ the temperature of the bath that the system is in contact with and $H$ the Hamiltonian of the system.

It is this result which implies reversibility of single copy transformations if an entanglement battery is allowed. This is the same sense in which thermodynamics has a reversible regime. Going between two states in thermodynamics requires an amount of work given by the change in free energy, while in the reverse process one obtains back the same amount of work.  Here, we even have reversibility on the level of fluctuations, namely, if we have entanglement fluctuations $w=\log{q_i}-\log{p_i}$, then there exists a reverse process which has equal and opposite entanglement fluctuations given by $w_{rev}=\log{p_i}-\log{q_i}$.

This contrast -- between irreversibility of single copy transformations when one does not allow a battery, and the ability to perform such transformations when one has an entanglement battery -- is reminiscent of very recent results in thermodynamics. There, it has been shown that state transformations which occur at the small scale \cite{HO-limitations} are fundamentally irreversible. Yet, if one allows fluctuating work, reversible transformations are possible when acting on single copies \cite{skrzypczyk2014work}. 

In addition to Eq.~\eqref{eq:inequality}, higher order corrections governing entanglement manipulations can be found by Taylor expanding Eq.~\eqref{eq:entSLE} as in \cite{alhambra2016fluctuating} giving:
\begin{equation}\label{eq:infbounds}
\sum_{k=1}^M
\frac {\left(\ln2\right)^k} {k!} 
\left\langle \left(w-\log q_j +\log p_i \right)^k \right\rangle 
\leq 0\ ,
\end{equation}
with $M$ odd and $M=1$ corresponds to the previously known average case.

\alv{That Eq. \eqref{eq:inequality} is necessary and sufficient means any state transition is in fact possible, given that one has access to enough entanglement \emph{on average}. However, this constraint does not contain information about the size such fluctuations, which also has to be taken into account. 
To show their importance, we focus on pure to pure transitions in which the Schmidt rank increases. Without a battery, the Schmidt rank cannot increase, not even probabilistically \cite{vidal1999entanglement}, and when assisted with a battery in BLOCC protocols, the difficulty of such transitions is expressed in the following result.
\begin{result}[Third law of entanglement]\label{re:3rdlaw}
	Let $p_{\text{min}}$ and $q_{\text{min}}$ be the smallest Schmidt coefficients of the initial and final states of the system. The entanglement fluctuations are bounded by
	\begin{equation}\label{eq:3rdlawin}
	\sum_w 2^w \ge \frac{q_{\text{min}}}{d' p_{\text{min}}},
	\end{equation}
	where $d'$ is the number of nonzero Schmidt coefficients of the final state.
\end{result}
The proof of this result follows from considering a particular case of the constraints of Result \ref{result:bat-stochastic}. The details can be found in Section \ref{sec:third} of the Appendix.
From this, it follows that in the limit in which we are increasing the Schmidt rank (that is, when $p_{\text{min}}\rightarrow 0$), either the amount of fluctuations or the size of them must diverge.
This is the analogue of a number of results in thermodynamics associated with the 3rd law, which roughly speaking states that decreasing the rank of a state requires infinite resources, in the form of infinite work fluctuations, an infinite-sized bath or both \cite{masanes2017general}.

On the other hand, other transitions between states may be such that the initial one has Schmidt coefficients that majorize the final. In that case one can refer back to the setting of Nielsen's theorem \cite{Nielsen-pure-entanglement}, which shows that no work at all is needed for the transition. In such cases in which the majorization conditions hold, perhaps up to some small error, we expect that the size of the fluctuations of $w$ will not be very strongly constrained. For instance, one could have a transition allowed by the majorization criteria in which some entanglement is gained on average, or where some large fluctuations occurs with a small probability.

 This discussion indicates that even if in principle all transitions between states are possible given enough average entanglement (as per Result \ref{res:rev}), some may still be harder than others if one considers the size or number of those fluctuations. This information is not captured by Eq. \eqref{eq:inequality} but by the necessary and sufficient constraints on the stochastic matrices of Result \ref{result:bat-stochastic}. 
 
 One can think of the appearance of $w$ in Eqs.~\eqref{eq:normalisation}-\eqref{eq:doesthetrick} as a correction to the bistochasticity constraints imposed by majorization. Thus, we expect that the further a particular transition is from satisfying the majorization conditions, the larger the work fluctuations should be to allow for that transition. Both Eq. \eqref{eq:inequality} and \eqref{eq:3rdlawin} support this conclusion in a different way: Eq. \eqref{eq:inequality} says that if the entanglement entropy of the initial state is lower than that of the final (forbidden by majorization), average fluctuations are unavoidable, and Eq. \eqref{eq:3rdlawin} says that if $q_{\min}/p_{\min}$ is very large (also forbidden by majorization), either we have a large number of fluctuations or a small number of very large ones.}

The general necessary and sufficient constraints of Result \ref{result:bat-stochastic} allow us to also find an analogue of the Jarzynski equation which applies to the case where the final state $\ket{\phi}_{AB}$ is a maximally entangled state of dimension $d'$
\begin{result}[Jarzynski for entanglement] \label{res:jar}
When the final state is a maximally entangled states of dimension $d'$, we have:
\begin{equation} \label{eq:entjar}
\left\langle 2^w \right\rangle = \frac{d}{d'},
\end{equation}
with $d$ the dimension of the support of the initial state.
\end{result}
The proof is given in Appendix~\ref{sec:jar} and it follows easily from the constraints of Result \ref{result:bat-stochastic}. Recall that the Jarzynski equation applies when an initially thermal state is driven to an out of equilibrium state, with a possibly different Hamiltonian. It is written as  
\begin{equation}
  \label{eq:thermo-jar}
  \left\langle
  \e^{\beta W}
  \right\rangle 
  = \frac {Z'} {Z}\ ,
\end{equation}
where $W$ is the thermodynamic work, and $Z$ and $Z'$ are the initial and final partition functions $Z=\tr[e^{-\beta H}]$. We thus see that for entanglement, the dimension of a maximally entangled state is akin to the partition function of the thermal state.

An immediate application of Eq.~\eqref{eq:entjar} is that it provides a {\it strong converse bound} \alv{that applies when} one attempts to concentrate more entanglement than the maximum rate given by $\log{d/d'}$ \cite{LoPopsecu1997-beyond}. That is, if one attempts to extract more entanglement than that rate, one immediately sees that in order to satisfy Eq.~\eqref{eq:entjar}, the probability of success has to go exponentially quickly to zero. 
\begin{align}
  P\!\left( w \ge \log{\mbox{$\frac{d}{d'}$}}+x \right)
  &=
  \sum_{w \ge \log{\frac{d}{d'}}+x}
  \hspace{-4mm} P(w)\  
 \\& \le 
  \sum_{w \ge \log{\frac{d}{d'}}+x}
  \hspace{-4mm} P(w)\ 2^{w- \log{\frac{d}{d'}}-x}\ \nonumber
 \\& \le \ 
  \sum_{w} P(w)\ 
  2^{w- \log{\frac{d}{d'}}-x}
  = 2^{-x}.\nonumber
\end{align}
 In the thermodynamic case, one similarly has that Eq~\eqref{eq:thermo-jar} implies that if one attempts to extract work from a heat bath, the probability of success goes exponentially fast to zero, and it is thus a quantitative strengthening of the ordinary second law of thermodynamics, which simply says that the average work you can extract from a single heat bath in a cyclic process is zero. \alv{With this,} we point out a link between the second law and the strong converse.

This brings us to our fifth result, an analogue of the Crooks relation from statistical mechanics~\cite{crooks1999entropy}, which we explain in Section \ref{sec:appcrooks} for completeness. Given any \emph{forward} LOCC protocol corresponding to the matrix $P\left(i,w|j\right)$,  we are able to define a reverse LOCC protocol where, in particular, Bob's unitary transformations are taken to be the inverse of the forward ones. \alv{The reverse process and its relation with the forward one are explained in detail in Section \ref{sec:reversed} of the Appendix. As it turns out, the two processes are related in a way similar to how a process and its time-reversed analogue are related in thermodynamics} \cite{crooks2008quantum,aberg2016fully}. In fact, we find:
\begin{result}[Crooks for entanglement]
Suppose $\ket{\psi}\stackrel{\textrm{BLOCC}}{\longrightarrow}\ket{\textrm{ebit}_d}$ while extracting entanglement $\left\{P\left(w\right),w\right\}$. Then there exists a state $\ket{\psi'}$ such that $\ket{\psi'}\stackrel{\textrm{BLOCC}^{\textrm{rev}}}{\longrightarrow}\ket{\textrm{ebit}_{d'}}$ while extracting entanglement $\left\{P^{\textrm{rev}}\left(-w\right),w\right\}$ and where:
\begin{equation}
\frac{P\left(w\right)}{P^{\text{rev}}\left(-w\right)}=2^{-w} \frac{d'}{d}.
\end{equation}
\end{result}
\alv{
The proof of this statement follows straightforwardly once the definition of the reverse process is established. It can be found in Section \ref{sec:crooks2} of the Appendix.}

In the same way that the Jarzynski equality can be derived from Crooks' theorem, this expression is a refinement of Eq.~\eqref{eq:entjar}. It implies that extracting $w$ in a forward protocol is exponentially suppressed with respect to extracting $-w$ in the reversed protocol. 


\section{Conclusion}\label{sec:conclusion}

Our results show that, thought the proof of the existence of the BLOCC protocols, one can consider entanglement as a quantity to which we can associate fluctuations, and whose fluctuations are constrained in much the same way as work is in the context of previous results in statistical mechanics. It is remarkable that the mathematical structure of thermodynamics and pure state entanglement transformation with a battery are so related, given the very different physical scenarios under consideration. For example, there is no heat bath in entanglement theory, the doubly-stochastic maps depend on the initial and final states of  particular transformation, unlike in thermodynamics, and the doubly stochastic maps take final states to initial states.

In classical thermodynamics (i.e. when the initial and final state are diagonal in the energy eigenbasis), it is unambiguous what the work is after a given process. In the \emph{implicit} case, one initially measures the total energy of system and bath, and performs the measurement once again at the end of the process. The energy difference must be the work which has been extracted. In the \emph{explicit} case where we include the battery and impose total energy conservation \cite{HO-limitations,alhambra2016fluctuating}, the work is just the energy difference in the battery before and after the transformation. Likewise, the system is originally in some particular energy level $\ket{i}$ and ends in some particular energy level $\ket{j}$. Thus the probability distribution over $i,j,w$ has a simple interpretation. 

However, in the quantum case, we cannot implement pure state thermodynamical transformations and expect that the work will always be a measurable quantity \cite{hayashi2017measurement,perarnau2017no}. This is because to implement arbitrary unitary transformations, one must have access to some system (the battery), which must be in a coherent state which is a superposition over many energy levels. Measuring the energy of this battery destroys the ability to implement the unitary transformation. We here see similar phenomena between the entanglement case, and the quantum thermodynamics case. We can measure the amount of entanglement consumed or extracted each time, but if we do so, then we cannot implement the transformation  $\ket{\psi}\rightarrow\ket{\phi}$. Nonetheless, the physical interpretation
of the fluctuation relation is clear, as it could in principle be measured.

In thermodynamics, work, whether classical or quantum, should be thought of as a process, not an observable \cite{alhambra2016fluctuating}. Classically, it is the process of raising or lowering a weight. Likewise, in the case of the entanglement battery, the entanglement fluctuation can be seen as the adding to or subtracting from the number of ebits in the battery. In both cases, the change in average quantities (whether work or entanglement) does not move the battery to an orthogonal state, and can only be measured on many copies. We thus have the intriguing phenomena that entanglement fluctuations in a battery  enable us to perform entanglement transformations, but do not correspond to a single von Neumann measurement. Crucially, the entanglement battery must have an uncertain amount of entanglement, that is, must not be in a state with a definite amount of entangled pairs, in order to aid in a pure state entanglement transformation. Since work can be identified with the process of raising and lowering the battery, it would be desirable to have a universal set of LOCC maps on the battery that applied to all the possible states in the system, as is the case in the thermodynamic scenario \cite{alhambra2016fluctuating}. This would perhaps amount to a more concrete description of the BLOCC protocols introduced here. While we believe this is possible, perhaps by starting from explicit constructions of LOCC protocols based on Nielsen's result \cite{Nielsen-pure-entanglement}, we leave the question for future work.

In the main section of this article, we have considered the case where the target state is only a single pure state $\ket{\phi}_{AB}\otimes \ket{\bat'}_{A'B'}$. In Section \ref{sec:ensembles} of the \supl \, we show that our results also hold in the case of an ensemble of pure target states $\left\{\ket{\phi^t}_{AB}\otimes \ket{\bat^t}_{A'B'}\right\}_t$. There, we take as an example the original entanglement concentration and dilution protocols. Using the results presented here, we can quantify the entanglement fluctuations in all concentration protocols, and we see why previous dilution protocols were sub-optimal. We show how to make them optimal and thus achieve true reversibility.

We find another interesting application of our results, in that we can solve a problem known as {\it partial entanglement recovery} \cite{morikoshi2000recovery}. There, one considers the irreversible LOCC transformation $\ket\psi_{AB}\rightarrow\ket\phi_{AB}$, and asks whether some of the entanglement can be recovered in the operation, by performing a transformation on an ancillary system  $\ket\phi_{AB}\otimes\chi_{A'B'}\rightarrow\ket\psi_{AB}\otimes\ket\omega_{A'B'}$. Since the choice of $\chi_{A'B'}$ and $\omega_{A'B'}$ are allowed to depend on $\psi_{AB}$ and $\phi_{AB}$, there is clearly the trivial solution where  $\chi_{A'B'}=\psi_{AB}$, and $\omega_{A'B'}= \psi_{AB}$, and one just performs the swap operation between $AA'$ and $BB'$. To rule out such trivial solutions, one can consider a notion of  {\it genuine  partial entanglement recovery}~\cite{bandyopadhyay2002partial} which requires limiting the dimension of the ancillary system. Progress on finding ways to partially recover some of the entanglement has been made in \cite{duan2006partial}. Here, we see that instead of restricting the dimension of the ancilla to enforce a notion of genuine recovery, we can instead demand that the ancillary state be universal. We then see that in fact, all the entanglement can be recovered.

Finally, the results here help us better understand the notion of catalysis and embezzlement, and provide a solution to the problem posed by approximate catalysis \cite{brandao2013second}. In exact catalysis, one asks whether there exists a system in state $\eta$ such that the transition $\rho\otimes\eta \rightarrow \sigma\otimes\eta$ is possible. In the case where the conditions for $\rho\otimes\eta \rightarrow \sigma\otimes\eta$ to be possible are given by majorization conditions, the conditions for $\rho\rightarrow\sigma$ to be possible catalytically were found to be given by the monotonicity of Renyi entropies~\cite{Klimesh-trumping,Turgut-trumping-2007}.

However, from a physical point of view, it is impossible to return a catalyst in exactly the same state, so it seems more natural to ask whether
$\rho\otimes\eta\rightarrow\sigma\otimes\eta'$ is possible, with $\eta\approx\eta'$. We thus see that our battery can be thought of as a type of catalyst in this sense. The conditions for $\rho\rightarrow\sigma$ to be possible under approximate catalysis depends on how close we should return the catalyst to its initial state \cite{brandao2013second}. In the case where we do not restrict the dimension of the catalyst, embezzlement \cite{Hayden-embezzling,harrow_entanglement_2009,leung2008coherent} poses a problem. Embezzling, is the process of extracting a resource from a state, without changing the state by very much. In the case of entanglement embezzling, we can extract ebits from an embezzling state without changing the state by much \cite{Hayden-embezzling}.

The problem then, is that if we allow approximate catalysis, any transformation is possible in the limit of an arbitrarily large catalyst, because we can use an embezzling state as an approximate catalyst. In fact, the amount by which the catalyst changes can be made arbitrarily small. This result has stalled entanglement theory, because one should allow approximate catalysis in any transformation, yet it seems to render the theory trivial, since all state transformations become possible. In the context of the present article, we see that we can bypass this issue. In particular, by quantifying how much fluctuating entanglement is transferred to the catalyst (in this case the entanglement battery), we can account for how much of it is used in any process. The fact that the catalyst's state changes very little does not take away from the fact that the amount of entanglement in it has changed by a significant amount. As a result, the theory does not trivialise.

Not only do we find an array of phenomena in entanglement akin to those found in thermodynamics, but also previous problems, such as entanglement recovery, the problem of embezzlement, and a strong converse of entanglement concentration, are related to those in thermodynamics. We thus see that many issues and open problems can be solved by connecting them to fluctuation relations. It is perhaps not surprising that fluctuation theorems for entanglement enable one to solve such problems, given the fertile research landscape that fluctuation theorems have opened up in the field of thermodynamics. Our hope is that likewise, fluctuating entanglement allows for the discovery of further phenomena in entanglement theory. Towards this aim, in Section \ref{sec:finitebattery} of the \supl \, we find that for the processing of a few qubits, an entanglement battery need not be large to be useful. This gives hope that experimental implementation of the protocols presented here, may one day be performed.

\alv{Most of the results here focus on the existence of LOCC protocols that implement the desired transitions, and do not specify the particular character or complexity of the unitaries and measurements. A similar problem occurs in the analogous thermodynamic scenario, for which recent results \cite{perry2018,lostaglio2018,baumer2018} show that one can in fact implement a large number of transitions with an experimentally feasible subset of free operations. This may be a starting point for a similar result in the present context of entanglement, which we leave for future work. }

\noindent
     {\bf Acknowledgments:} JO thanks Daniel Gottesman and Debbie Leung for interesting discussions, and an EPSRC Established Career Fellowship and Royal Society Wolfson merit award for support.
		CP acknowledges financial support from the European Research Council (ERC Grant Agreement no 337603), the Danish Council for Independent Research (Sapere Aude) and VILLUM FONDEN via the QMATH Centre of Excellence (Grant No. 10059).
		Research  at  Perimeter  Institute  is  supported  by  the
		Government of Canada through the Department of Innovation, Science and Economic
		Development and by the Province of Ontario through the Ministry of Research, Innovation and Science
		This work was supported in part by COST Action MP1209.


\clearpage
\widetext
\appendix

\section{Battery Assisted LOCC}

\subsection{The entanglement battery} \label{ss:battery}

In this section we expand the definition of the entanglement storage battery, and we show how to define it so that not only integer values of $w$ can be stored. In order to allow for the consumption or generation of non-integral amounts of entanglement, we substitute the product states and ebits in Eq.~\eqref{integer battery} with the following two types of states which contain almost the same amount of entanglement
\begin{eqnarray}
\label{G+}
\ket{\Gamma_u^+}
&=& 
\frac 1 {\sqrt u} \sum_{i=1}^u \ket{ii} \ ,
\\ \label{G-}
\ket{\Gamma^-_u}
&=& 
\frac 1 {\sqrt {u-1}} \sum_{i=u+1}^{2u-1} \ket{ii} \ .
\end{eqnarray}
The state $\ket{\Gamma^+_u}$ contains $\log u$ e-bits of entanglement, while $\ket {\Gamma^- _u}$ contains $\log (u-1)$ e-bits. Hence, in going from one to the other $\ket{\Gamma^+_u} \to \ket{\Gamma^- _u}$, the amount of entanglement that we borrow is 
\begin{equation}
\delta w = \log\left(\frac{u}{u-1}\right) 
\approx \frac 1 u\ , 
\end{equation}  
where the above approximation holds in the large $u$ limit. Therefore, the parameter $u$ controls how fine-grained the entanglement scale is. We will henceforth choose $u$ large enough such that all values of $w$ are as close as required to multiples of $\delta w$. Also note that the two states $\ket{\Gamma^+_u}, \ket{\Gamma^- _u}$ are locally distinguishable.

The states of the battery that have a well-defined amount of entanglement are the following
\begin{equation}
\ket{e_x}_{A'B'}=
\underbrace{
\ket{\Gamma^+_u}\otimes \cdots \otimes \ket{\Gamma^+_u}
}_x
\otimes \underbrace{
\ket{\Gamma^-_u}\otimes \cdots \otimes \ket{\Gamma^-_u}
}_{n-x}\ ,
\end{equation}
for all integers $x\in \left\{0,\dots,n\right\}$ and $A'$,$B'$ labeling Alice and Bob's halves of the battery respectively. In the protocols that we consider, all battery states are contained in the subspace generated by $\left\{\ket {e_x}\right\}_{x=0}^{n}$. The reduced state on Alice or Bob's half of the battery is then:
\begin{equation}
s_x= \tr_{B'} \proj{e_x} =
\left(\frac{1}{u}
\sum_{i=1}^{u}\ketbra{i}{i}\right)^{\otimes x} \otimes \left(\frac{1}{u-1}
\sum_{i=u+1}^{2u-1}\ketbra{i}{i}\right)^{\otimes\left(n-x\right)}.
\end{equation}
The set of states $\left\{s_x\right\}_{x=0}^{n}$ are also orthogonal and live on some subspace $A'$ of $\mathcal{H}^{\otimes n}$ where $\mathcal{H}=\mathbb{C}^{2u-1}$. When restricted to Alice's system, our LOCC protocol will map this subspace to itself. We will often use a suitable restriction of $\left\{\ket{z}\right\}_{z\in\left\{1,2,\dots,2u-1\right\}^n}$ as an orthonormal basis for $A'$. We will write $z \in s_x$ to denote that $\ket{z}$ belongs to the support of $s_x$.
Note also that:
\begin{equation}
\sum_{x=0}^{n} u^x\left(u-1\right)^{n-x} s_x = \mathbb{I}_{A'},
\end{equation}
so the orthogonal projectors $P_x = u^x\left(u-1\right)^{n-x} s_x$ give a resolution of the identity on $A'$. In general, the initial state of the battery is denoted by Eq.~\eqref{eq:battery}.

\subsection{Reversible pure-state transformations require a battery in a uniform superposition}
\label{ss:uniform}

In this Section of the \supl\ we show that the state of the battery $\ket \eta _{A'B'} = \sum_x \gamma_x \ket {e_x}_{A'B'}$ must be close to a uniform superposition of entanglement eigenstates $\ket{e_x}_{A'B'}$, if we assume the following two conditions:
\begin{itemize}
\item The only allowed actions on the battery are rising and lowering the amount of entanglement, so that the final state is of the form
\begin{equation}
\label{general final state}
	\ket {\Phi}_{ABA'B'} =
	\sum_r \ket {\phi_r} _{AB} \otimes
	\Delta^r \ket \eta _{A'B'} \ .
\end{equation}
  
\item The state of the battery $\ket \eta_{A'B'}$ allows for approximately implementing all reversible pure-to-pure entanglement transformations $\ket \psi_{AB} \to \ket \phi_{AB}$. That is, for every $\epsilon>0$, there is a reversible BLOCC transformation with final state~\eqref{general final state} being $\epsilon$-close to the target one
\begin{equation}
	\label{targer}
	\|\Phi_{ABA'B'} 
	-\phi_{AB} \otimes\eta_{A'B'} \|_1
	\leq \epsilon\ ,
\end{equation}
and with identical marginal
\begin{equation}
  \label{equal marginals}
  \Phi_B = \phi_B\ .
\end{equation}

\end{itemize}

The first condition is what allows us to quantify the notion of an entanglement fluctuation, by defining it to be the adding or subtracting of the number of ebits of the battery. Before proving the uniformity of $\ket\eta$ let us collect some useful facts. The non-zero Schmidt coefficients of $\ket {e_x}$ are 
\begin{equation}
  \mathcal S _{\ket{e_x}} =
  \left\{ 
  \xi_x^{-1/2} ,
  \mbox{ appearing } \xi_x   
  \mbox{ times}
  \right\}\ ,
\end{equation}
where we define the constants
\begin{equation}
  \xi_x = 
  (u-1)^{n} 
  \left(\frac u {u-1} \right)
  ^{x}\ .  
\end{equation}
For any components $\gamma_x$, the non-zero Schmidt coefficients of $\ket \eta = \sum_x \gamma_x \ket {e_x}$ are 
\begin{equation}
  \mathcal S _{\ket\eta} =
  \left\{ |\gamma_x|\xi_x^{-1/2},
  \mbox{ appearing } \xi_x   
  \mbox{ times, for all } x
  \right\}\ .
\end{equation}
Recalling that $\Delta^{\delta w} \ket{e_x} = \ket{e_{x+1}}$ and $\delta w = \log(u/(u-1))$, we arrive at 
\begin{equation}
  \label{cart prod}
  \mathcal S_{\Delta^{\delta w} \ket\eta} = 
  \left\{ 
  |\gamma_x|\xi_x^{-1/2} 
  \sqrt{\frac {u-1} u},
  \mbox{ appearing } 
  \xi_x \frac u {u-1}  
  \mbox{ times, for all } x
  \right\}\ .
\end{equation}
Also, we note that without loss of generality we can assume that the coefficients $\gamma_x$ are real and positive. Hence, we define $\alpha_x = \gamma_x^2$, which satisfy normalization $\sum_x \alpha_x =1$.

Now, let us consider the particular pure-to-pure reversible transformation $\ket \psi_{AB} \to \ket \phi_{AB}$ with
\begin{eqnarray}
  \label{A psi}
  \ket \psi _{AB} &=& 
  \frac 1 {\sqrt{2}} \left(
  \ket{0,0}_{AB} 
  + \ket{\Gamma_u^+}_{AB} \right)
  \ , \\ \label{A phi}
  \ket \phi _{AB} &=& 
  \frac 1 {\sqrt{2}} \left(
  \ket{0,0}_{AB} 
  + \ket{\Gamma_u^-}_{AB} \right)
  \ ,
\end{eqnarray}
where $\ket{\Gamma_u^\pm}_{AB}$ are defined in~(\ref{G+}-\ref{G-}). The states $\{\ket 0_A , \ket 1_A, \ldots, \ket {2u-1} _A\}$ form an orthonormal basis for Alice's Hilbert space, and analogously for Bob. It is known~\cite{Nielsen-pure-entanglement} that reversibility is only possible when the Schmidt coefficients of the initial and final states are identical. And, since the Schmidt coefficients of the two states~\eqref{A psi} and~\eqref{A phi} are different, reversibility can only be achieved with a non-trivial action on the battery. Let us prove that, if the global initial state is $\ket \Psi _{ABA'B'} = \ket{\psi}_{AB} \otimes \ket \eta _{A'B'}$, then the global final state must be
\begin{equation}
  \label{gen fin s}
  \ket \Phi _{ABA'B'}
  =
  \frac 1 {\sqrt 2} \left(
  \ket{0,0}_{AB} 
  \otimes \ket \eta _{A'B'}
  + \ket{\Gamma_u^-}_{AB} 
  \otimes \Delta^{\delta w} \ket \eta _{A'B'} 
  \right)  \ .
\end{equation}
The Schmidt coefficients of the initial state are the Cartesian product 
\begin{equation}
  \label{target SC}
  \bigg\{\frac 1 {\sqrt 2} , \underbrace{\frac 1 {\sqrt{2u}}, \ldots, \frac 1 {\sqrt{2u}}}_u  \bigg\}
  \times \mathcal S_{\ket\eta}\ .
\end{equation}
Now, we must show that the only final state of the form~\eqref{general final state} with the above Schmidt coefficients is~\eqref{gen fin s}. Invoking~\eqref{equal marginals} we obtain 
\begin{equation}
  \label{A12}
  \ket \Phi_{ABA'B'}
  =
  \frac 1 {\sqrt 2} \ket{0,0}_{AB} 
  \otimes \ket {\eta_0} _{A'B'}
  + \frac 1 {\sqrt{2(u-1)}} \sum_{j=1}^{u-1} 
  \ket{j,j}_{AB} 
  \otimes \ket {\eta_j} _{A'B'} 
  \ ,
\end{equation}
with
\begin{equation}
  \ket {\eta_j} = 
  \sum_r c_j^r \Delta^r \ket \eta
  \ ,
\end{equation}
and $j = 1,2, \ldots, u-1$.
If there is a value of $j$ with more than one non-zero $c_j^r$, then, the Schmidt coefficients $|\sum_r c_j^r \sqrt{\alpha_{x-r}}|\, \xi_x^{-1/2}$ cannot be of the form $\sqrt{\alpha_{x}}\, \xi_x^{-1/2}$. Because for any $x \geq n- \frac {r_{\max}} {\delta w}$ we must have $\alpha_x=0$, where $r_{\max}$ denotes the largest value of $r$ in this transformation. Therefore, state~\eqref{A12} can also be written as
\begin{equation}
  \ket \Phi_{ABA'B'}
  =
  \frac 1 {\sqrt 2} \ket{0,0}_{AB} 
  \otimes \Delta^{r_0}\ket {\eta} _{A'B'}
  + \frac 1 {\sqrt{2(u-1)}} \sum_{j=1}^{u-1} 
  \ket{j,j}_{AB} 
  \otimes \Delta^{r_j} \ket {\eta} _{A'B'} 
  \ .
\end{equation}
Now, using~\eqref{cart prod}, we see that the only way to recover the Schmidt coefficients~\eqref{target SC} is to set $r_j=\delta w$ for all $j>0$ and $r_0=0$. This is precisely~\eqref{gen fin s}.

Next, we prove the uniformity of the coefficients $\alpha_x$ by invoking condition~\eqref{targer}. In order to do this, we need to compute the partial trace of~\eqref{gen fin s}, which is
\begin{equation}
  \Phi _{AB}
  =
  \frac 1 2 \Big(
  \proj{0,0}
  + \proj{\Gamma_u^-}
  + \ket{0,0}\! \bra{\Gamma_u^-}
  \bra\eta \Delta^{-\delta w} \ket\eta
  + \ket{\Gamma_u^-}\! \bra{0,0} 
  \bra\eta \Delta^{\delta w} \ket \eta
  \Big)  \ .
\end{equation}
Substituting this in~\eqref{targer} we obtain
\begin{equation}
  \frac 1 2 \Big\|
  \ket{0,0}\! \bra{\Gamma_u^-}
  \big(\bra\eta \Delta^{-\delta w} \ket\eta -1\Big)
  + \ket{\Gamma_u^-}\! \bra{0,0} 
  \Big( \bra\eta \Delta^{\delta w} \ket \eta -1\Big)
  \Big\|_1 \leq \epsilon\ ,
\end{equation}
which is equivalent to
\begin{equation}
  \sum_x \sqrt{\alpha_x \alpha_{x+1}}
  =
  \bra\eta \Delta^{\delta w} \ket \eta 
  \geq 1-\epsilon\ .
\end{equation}
Using the identity 
\begin{equation}
  \frac 1 2 \big\| \proj \psi -\proj \phi \big\|_1
  =
  \sqrt{1- \left|\langle \psi \ket \phi \right|^2}
\end{equation}
on the two pure states $\ket \eta$ and $\Delta^{\delta w} \ket \eta$, we obtain
\begin{equation}
  \label{dist fid}
  \sum_x \left|
  \alpha_x -\alpha_{x+1} \right| 
  \leq 
  2 \sqrt{1- \left( \mbox{$\sum_x$} \sqrt{\alpha_x \alpha_{x+1}} \right)^2}
  \leq
  \sqrt{8\epsilon}\ .
\end{equation}
And finally, applying the triangular inequality, we obtain
\begin{equation}
  \label{s dist alpha}
  \sum_x \left|\alpha_x -\alpha_{x+y} \right| 
  \leq 
  y \sqrt{8\epsilon}
  \ ,
\end{equation}
for all $y$ such that $|y| \leq r_{\max}/\delta w$.

\alv{It is clear that, given $\epsilon$ and $y$, the inequality will hold the more uniform the set of $\alpha_x$ is. For instance, if we take $\alpha_x=1/N$ uniform, we have that $\sum_x \left|\alpha_x -\alpha_{x+y} \right| =\frac{2y}{N}$, so an error of $\epsilon$ in the transformation means we need a width of $N\ge \frac{1}{\sqrt{2 \epsilon}}$.}

\newpage
\section{Necessary conditions for pure-state BLOCC} \label{sec:necc}

In this section we prove the necessary part of Result \ref{result:bat-stochastic}. That is, if a BLOCC protocol exists for the transformation given in Eq.~\eqref{eq:goal}, then there must exist a conditional distribution $P\left(i,w|j\right)$ satisfying the three conditions given by Eqs.~\eqref{eq:normalisation}-\eqref{eq:doesthetrick}. Throughout, we shall denote the density matrix of a pure state $\ket{\psi}_{AB}$ by $\psi_{AB}$ and its reduced state on subsystem $A$ by $\psi_A$, omitting subsystem labels when the context is clear.

\alv{We divide the proof into two parts. First, we outline the structure of the LOCC operations we consider. We then use this structure to prove that the existence of a BLOCC protocol implies the existence of a stochastic matrix that obeys the constraints of Result \ref{result:bat-stochastic}.}

\subsection{Pure-state LOCC transformations}

Let $\Lambda$ denote the CP-map associated to a particular BLOCC protocol.
Since this transforms pure states to pure states (on system plus battery) we can assume \cite{LoPopsecu1997-beyond} that $\Lambda$ consists of the following steps:
\begin{enumerate}
\item Alice performs a POVM $\left\{M_m\right\}$ on $AA'$.
\item Alice sends the outcome $m$ to Bob.
\item Bob applies a correction unitary $V_m$ on $BB'$.
\end{enumerate}
Thus, following \cite{Nielsen-pure-entanglement}, we have
\begin{equation}
  \Phi_{ABA'B'} =
  \Lambda\left(\Psi_{ABA'B'}\right) = 
  \sum_m \left(M_m\otimes V_m\right)\Psi_{ABA'B'}
    \left(M_m\otimes V_m \right)^\dagger.
\end{equation}
Imposing the purity of the final state $\Phi_{ABA'B'}$, we get
\begin{equation}
  \Phi_{ABA'B'}
  \propto 
  \left(M_m\otimes V_m\right)  
  \Psi_{ABA'B'}\left(M_m\otimes V_m \right)^\dagger,\quad \forall m\ ,
\end{equation}
which implies the existence of some positive coefficients $r_m$ satisfying
\begin{equation} \label{eq:forwardp1}
  \left(M_m \otimes V_m\right) 
  \ket{\Psi}_{ABA'B'} 
  =
  \sqrt {r_m}\,  
  \ket{\Phi}_{ABA'B'}\ .
\end{equation}
Applying the unitary $V_m^\dagger$ on the two sides we get $(M_m \otimes \id) \ket{\Psi} = \sqrt {r_m}\, (\id\otimes V_m^\dagger) \ket{\Phi}$ and
\begin{equation}
  \bra {\Psi}(M_m^\dagger M_m) \otimes X \ket{\Psi}
  = r_m
  \bra {\Phi} \id\otimes (V_m XV_m^\dagger) \ket{\Phi}
  \ ,
\end{equation}  
for any operator $X$, where we have omitted the specification of subsystems $ABA'B'$.
Using $\sum_m M_m^\dagger M_m = \id$, we obtain
\begin{equation}
  \label{X eq}
  \bra {\Psi} \id\otimes X \ket{\Psi} 
  =
  \bra {\Phi} \id\otimes \Lambda_{BB'} (X) \ket{\Phi}
  \ ,
\end{equation}  
where $\Lambda_{BB'}$ is 
\begin{equation}
  \Lambda_{BB'} (X) = 
  \sum_m r_m V_m X V_m^\dagger\ .
\end{equation}
We emphasize that $\Lambda_{BB'}$ depends on the initial state $\ket{\Psi}_{ABA'B'}$ via the probabilities $r_m = \bra {\Psi} (M_m^\dagger M_m) \otimes \id \ket{\Psi}$.
This allows us to relate the map $\Lambda_{BB'}$ to the global map generated by the actual protocol
\begin{equation}
  \Lambda_{BB'} (X) =
  \tr_{AA'}\! \left[
  \Lambda_{ABA'B'} (\Psi_{AA'} \otimes X)
  \right]\ .
\end{equation}
Eq.~\eqref{X eq} can also be written as
\begin{equation}
  \label{eq central}
  \Psi_{BB'} = \Lambda_{BB'}^* (\Phi_{BB'})\ ,
\end{equation}
where $\Psi_{BB'}$ and $\Phi_{BB'}$ are Bob's initial and final reduced states, and $\Lambda_{BB'}^*$ is the dual map of $\Lambda_{BB'}$.

\subsection{Necessary conditions for \texorpdfstring{$P(i,w|j)$}{P(i,w|j)}}

In what follows we define the conditional distribution $P(i,w|j)$, which captures relevant information about the CP-map of the BLOCC transformation $\Lambda$.
In order to derive the necessary conditions we need to define $P(i,w|j)$ imposing that the final system-battery state is product $\Phi_{ABA'B'} = \phi_{AB} \otimes\eta_{A'B'}$, which, as expressed in Eq.~\eqref{targer}, is true in the limit $\epsilon\to 0$. Hence,
\begin{eqnarray}
  P(i,w|j) &=&
  \sum_{x'} \tr\! \left[
  (\proj i \otimes P_{x' - \frac w {\delta w}})
  \Lambda_{BB'}^* \left(
  \proj j \otimes \left[
  P_{x'}\, \eta_{B'}\, P_{x'}
  \right]\right)\right] \label{eq:relationP}
  \\ &=& 
  \sum_{x'} \alpha_{x'} \tr\! \left[
  (\proj i \otimes P_{x'- \frac w {\delta w}})
  \Lambda_{BB'}^*(\proj j \otimes s_{x'})
  \right] \label{eq:relationP2},
\end{eqnarray}
which corresponds to the statistics obtained in the following 5-step procedure:
\begin{enumerate}
  \item Prepare the state $\ketbra{j}{j}_B \otimes \eta_{B'}$.
  
  \item Measure the position of the battery $P_{x'}$.
  
  \item Transform the resulting state with the map $\Lambda_{BB'}^*$.
  
  \item Measure the system with $\proj i$ and the battery with $P_{x}$.
  
  \item Record the variable $w= (x'-x) \delta w$ and forget $x$ and $x'$.
\end{enumerate}

Let us now see that Eqs~\eqref{eq:normalisation} - \eqref{eq:doesthetrick} are necessary.
By construction, $P(i,w|j)$ is a normalised probability distribution
\begin{eqnarray}
  \nonumber
  \sum_{i,w} P(i,w|j) 
  &=& 
  \sum_{x'} \tr\! \left[
  (\id \otimes \id)
  \Lambda_{BB'}^* \left(
  \proj j \otimes \left[
  P_{x'}\, \eta_{B'}\, P_{x'}
  \right]\right)\right]
  \\ &=&
  \tr\! \left[
  \Lambda_{BB'}^* \left(
  \proj j \otimes \eta_{B'}
  \right)\right]
  = 1 ,
\end{eqnarray}
where we have used that the map $X \mapsto \sum_{x'} P_{x'} X P_{x'}$ is trace-preserving. Hence, we have shown that Eq.~\eqref{eq:normalisation} holds.

Now, let us move on to proving Eq.~\eqref{eq:w-stochasticity}.
Using $P_x = u^x \left( u-1 \right)^{n-x} s_x $ and the unitality of the map $\Lambda_{BB'}^*$, we obtain
\begin{eqnarray}
  \nonumber
  \sum_{w,j} P(i,w|j) 2^{w}
  &=&
  \sum_{w,x'} \alpha_{x'} \tr\! \left[
  (\proj i \otimes s_{x' -\frac{w}{\delta w}})
  \Lambda_{BB'}^*(\id \otimes P_{x'})
  \right]
  \\ \nonumber &\approx &
  \sum_{w,x'} \alpha_{x'-\frac{w}{\delta w}} 
  \tr\! \left[
  (\proj i \otimes s_{x' -\frac{w}{\delta w}})
  \Lambda_{BB'}^*(\id \otimes P_{x'})
  \right]
  \\ &= &
  \tr\! \left[
  (\proj i \otimes \eta_{B'})
  \Lambda_{BB'}^*(\id \otimes \id)
  \right]
  = 1\ ,
\end{eqnarray}
where we have approximated $\alpha_{x'} \approx \alpha_{x' -\frac{w}{\delta w}}$.
We can bound the accuracy of this approximation as
\begin{eqnarray}
  \nonumber
  && \left| 
  \sum_{w,j} P(i,w|j)\, 2^{w}
  -1 \right|
  \\ \nonumber &=&
  \left| \sum_{w,x'} 
  (\alpha_{x'} -\alpha_{x' -\frac{w}{\delta w}})\, 
  \tr\! \left[
  (\proj i \otimes s_{x' -\frac{w}{\delta w}})
  \Lambda_{BB'}^*(\id \otimes P_{x'})
  \right]\right| 
  \\ \nonumber &=&
  \left| \sum_{w,x} 
  (\alpha_{x'+\frac{w}{\delta w}} -\alpha_{x'} )\, 
  \tr\! \left[
  (\proj i \otimes s_{x'})
  \Lambda_{BB'}^*(\id \otimes P_{x'+\frac{w}{\delta w}})
  \right]\right| 
  \\ \nonumber &\leq &
  \sum_{w:|w|\leq w_{\max}}\sum_{x'} 
  \left| \alpha_{x'+\frac{w}{\delta w}} -\alpha_{x'} \right|
  \\ &\leq &
  \sqrt{8\, \epsilon}\  
  \frac {w_{\max}} {\delta w} 
  \approx 
  \sqrt{8\, \epsilon}\ w_{\max}\, u\ ,
\end{eqnarray}
where we have used Eq.~\eqref{s dist alpha}, and the fact that  the number of values of $w$ in the range $|w| \leq w_{\max}$ and with discretisation $\delta w$ is approximately $w_{\max} /\delta w$.
That is, for fixed $w_{\max}$ and $u$, the approximation becomes more exact as $\epsilon$ tends to zero. We thus see that $P\left(i,w|j\right)$ satisfies Eq.~\eqref{eq:w-stochasticity} in the limit $\epsilon\rightarrow 0$.

To obtain Eq.~\eqref{eq:doesthetrick}
we use Eq.~\eqref{eq central} and the approximate equality of reduced states $\Phi_{BB'} \approx \phi_B \otimes \eta_{B'} = (\sum_{j} q_j \proj j) \otimes (\sum_{x'} \alpha_{x'} s_{x'})$ in the following
\begin{eqnarray}
  \nonumber
  \sum_{w,j} P(i,w|j) q_j 
  &=&
  \sum_{j,x'} q_j\, \alpha_{x'} \tr\! \left[
  (\proj i \otimes \id)
  \Lambda_{BB'}^*(\proj j \otimes s_{x'})
  \right]
  \\ \nonumber &=& 
  \tr\! \left[(\proj i \otimes \id)
  \Lambda_{BB'}^*(\phi_B \otimes \eta_{B'}) \right]
  \\ \nonumber &\approx &
  \tr\! \left[(\proj i \otimes \id)
  \Lambda_{BB'}^*(\Phi_{BB'}) \right]
  \\ &=& 
  \tr\! \left[(\proj i \otimes \id)
  \Psi_{BB'}  \right]
  = p_i\ .
\end{eqnarray}
Now, we can bound the accuracy of the above approximation by using assumption~\eqref{targer} as
\begin{eqnarray}
  \nonumber
  \sum_i \left| \sum_{w,j} P(i,w|j) q_j 
  -p_i\right|
  &=&
  \sum_i \left|
  \tr\! \left[(\proj i \otimes \id)
  \Lambda_{BB'}^*(\phi_B \otimes \eta_{B'} -\Phi_{BB'}) \right]
  \right|
  \\ \nonumber &\leq & 
  \left\|
  \Lambda_{BB'}^*(\phi_B \otimes \eta_{B'} -\Phi_{BB'})
  \right\|_1
  \\ &\leq & 
  \left\| \phi_B \otimes \eta_{B'} -\Phi_{BB'} \right\|_1
  \leq \epsilon\ ,
\end{eqnarray}
where we have used that $\|X\|_1 = |\max_{0\leq P \leq \id} \tr PX |$ and that $\Lambda_{BB'}^*$ is a trace-preserving CP-map.

\section{Sufficient conditions for pure-state BLOCC} \label{sec:suff}

In this section we prove the sufficient part of Result \ref{result:bat-stochastic}. That is, if there exists a conditional distribution $P\left(i,w|j\right)$ satisfying the three conditions given by Eqs.~\eqref{eq:normalisation}-\eqref{eq:doesthetrick}, then there exists a BLOCC protocol for the transformation given in Eq.~\eqref{eq:goal}.

We start with a conditional probability distribution $P\left(i,w|j\right)$ satisfying Eq.~\eqref{eq:normalisation}, \eqref{eq:w-stochasticity} and \eqref{eq:doesthetrick}. Our goal is to show that given such a probability distribution, it is possible to construct a Battery Assisted LOCC protocol that converts a bipartite pure state with Schmidt coefficients $p_i$ into a bipartite pure state with coefficients $q_j$ whilst extracting a coherent entanglement distribution $\left\{P\left(w\right),w\right\}$.

To do this, we first need to pick a battery of size $n$ and with spacing parameter $u$ such that it is capable of incorporating fluctuations by $w$. In other words, we want to pick $u$ such that for each $w$ there exists an integer $a^{u}_{w}$ such that $w\approx a^{u}_{w} \log\left(\frac{u}{u-1}\right)$. As $u$ increases, this approximation improves. More specifically, for fixed $u$, we take $a^{u}_{w}$ to be the greatest integer such that:
\begin{equation}
a^{u}_{w} \log\left(\frac{u}{u-1}\right) \leq w.
\end{equation}
With respect to this, $P\left(i,w|j\right)$ satisfies (using Eq.~\eqref{eq:w-stochasticity}):
\begin{equation} \label{eq:gen gibbs approx}
\sum_{j,w} P\left(i,w|j\right)\left(\frac{u}{u-1}\right)^{a^{u}_{w}} \leq 1.
\end{equation}
Introducing $a^{u}_{w}$ allows us to deal with the fact that general $w$ cannot be written as an exact multiple of $\log\left(\frac{u}{u-1}\right)$. We will define the maximum of all these as
 \begin{equation}\label{eq:maxau}
 a^{u}_{\textrm{max}}=\max_w\left\{\left|a^{u}_{w}\right|\right\}.
 \end{equation}

To prove sufficiency, we will first construct a series of LOCC protocols $\Lambda_N$, indexed by $N:=n-2a^{u}_{\textrm{max}}$, such that (for even $N$):
\begin{equation}
  \ket{\Psi_N}=\sum_{i=1}^{d}\sum_{x=0}^{n} \sqrt{\sum_w \frac{p_{iw}}{N+1} \delta_{x+a^{u}_{w}\in\left\{\frac{n-N}{2},\dots,\frac{n+N}{2}\right\}}} \ket{ii}\otimes \ket{e_x},
  \label{eq:initialN}
\end{equation}
where $p_{iw}:=\sum_{j=1}^{d}P\left(i,w|j\right)q_j$, is converted into:
\begin{equation}
  \label{eq:finalN}
\ket{\Phi_N}=\sum_{j=1}^{d}\sum_{x'=0}^{n}  \sqrt{\frac{q_j}{N+1}}\delta_{x'\in\left\{\frac{n-N}{2},\dots,\frac{n+N}{2}\right\}} \ket{jj}\otimes\ket{e_{x'}}.
\end{equation}
Note that here we take an initial state that is correlated between system and battery and convert it into a final state which is product across this divide and has support of size $N+1$ on the battery. However, the protocol can also be applied in the case where the initial system and battery are uncorrelated. This we consider in Section \ref{ssec:large n} of the \supl, where we prove that in the limit of large $N$, the state in Equation \eqref{eq:initialN} tends to a product state, and thus acting the protocol on an initial product state will result in a target state arbitrarily close to the ideal one of Equation \eqref{eq:finalN} and an entanglement distribution which is also arbitrarily close to the ideal one.

\subsection{Construction of \texorpdfstring{$\Lambda_N$}{Lambda}}
\label{sec:constructionlambda}
To show the existence of a protocol converting $\ket{\Psi_N}$ into $\ket{\Phi_N}$, we ultimately need to construct a doubly stochastic matrix that maps the Schmidt coefficients of $\ket{\Phi_N}$ to those of $\ket{\Psi_N}$ \cite{Nielsen-pure-entanglement}. We do this in three steps.

\subsubsection{Conversion to \texorpdfstring{$P\left(i,x|j,x'\right)$}{P(i,x|j,x')}}

\noindent From $P\left(i,w|j\right)$ and $a^{u}_{w}$, we first define the object $P\left(i,x|j,x'\right)$ via:
\begin{equation}\label{eq:pix}
P\left(i,x|j,x'\right)=\sum_w P\left(i,w|j\right) \delta_{x'-x,a^u_w},
\end{equation}
where $x$ and $x'$ are integers between $\pm\infty$. Next, we rewrite Eqs.~\eqref{eq:normalisation}, \eqref{eq:gen gibbs approx} and \eqref{eq:doesthetrick} in terms of this new object.

\noindent Using Eq.~\eqref{eq:normalisation}, we see that $P\left(i,x|j,x'\right)$ satisfies:
\begin{align*}
\sum_{i=1}^{d}\sum_{x=-\infty}^{\infty}P\left(i,x|j,x'\right)&=\sum_{i=1}^{d}\sum_{x=-\infty}^{\infty}\sum_w P\left(i,w|j\right)\delta_{x'-x,a^u_w}\\
&=\sum_{i=1}^{d}\sum_w P\left(i,w|j\right)\\
&=1,
\end{align*}
while Eq.~\eqref{eq:gen gibbs approx} gives that:
\begin{align*} 
\sum_{j=1}^{d}\sum_{x'=-\infty}^{\infty}P\left(i,x|j,x'\right)\left(\frac{u}{u-1}\right)^{x'-x}&=\sum_{j=1}^{d}\sum_{x'=-\infty}^{\infty}\sum_w P\left(i,w|j\right)\delta_{x'-x,a^u_w}\left(\frac{u}{u-1}\right)^{x'-x}\\
&=\sum_{j=1}^{d}\sum_w P\left(i,w|j\right)\left(\frac{u}{u-1}\right)^{a^u_w}\\
&\leq 1.
\end{align*}
Finally, Eq.~\eqref{eq:doesthetrick} can be used to show that:
\begin{align*} 
&\sum_{j=1}^{d}\sum_{x'=-\infty}^{\infty}P\left(i,x|j,x'\right)\frac{q_j}{N+1}\delta_{x'\in\left\{\frac{n-N}{2},\dots,\frac{n+N}{2}\right\}}\\
=&\sum_{j=1}^{d}\sum_{x'=-\infty}^{\infty}\sum_w P\left(i,w|j\right)\delta_{x'-x,a^u_w}\delta_{x'\in\left\{\frac{n-N}{2},\dots,\frac{n+N}{2}\right\}}\frac{q_j}{N+1}\\
=&\sum_{j=1}^{d}\sum_w P\left(i,w|j\right)\frac{q_j}{N+1}\delta_{x+a^u_w\in\left\{\frac{n-N}{2},\dots,\frac{n+N}{2}\right\}}\\
=&\sum_w \frac{p_{iw}}{N+1} \delta_{x+a^{u}_{w}\in\left\{\frac{n-N}{2},\dots,\frac{n+N}{2}\right\}}.
\end{align*}

To summarise, our three equations are now:
\begin{align}
\sum_{i=1}^{d}\sum_{x=-\infty}^{\infty}P\left(i,x|j,x'\right)&=1, \label{eq:P stoch}\\
\sum_{j=1}^{d}\sum_{x'=-\infty}^{\infty}P\left(i,x|j,x'\right)\left(\frac{u}{u-1}\right)^{x'-x}&\leq1, \label{eq:P gen gibbs approx}\\
\sum_{j=1}^{d}\sum_{x'=-\infty}^{\infty}P\left(i,x|j,x'\right)\frac{q_j}{N+1}\delta_{x'\in\left\{\frac{n-N}{2},\dots,\frac{n+N}{2}\right\}}&=\sum_w \frac{p_{iw}}{N+1} \delta_{x+a^{u}_{w}\in\left\{\frac{n-N}{2},\dots,\frac{n+N}{2}\right\}}. \label{eq:P trans}
\end{align}
Note also that in a refinement of Eq.~\eqref{eq:P stoch}, for $x'\in\left\{\frac{n-N}{2},\dots,\frac{n+N}{2}\right\}$ we have that:
\begin{equation} \label{eq:P stoch exact}
\sum_{i=1}^{d}\sum_{x=0}^{n} P\left(i,x|j,x'\right)=1.
\end{equation}

\subsubsection{Construction of a doubly sub-stochastic matrix}

From $P\left(i,x|j,x'\right)$ we will now construct a matrix with rows and columns labeled by the Schmidt basis of system-battery, $\ket{i,z}$ and $\ket{j,z'}$ respectively. This matrix will be doubly sub-stochastic (the row and column sums will be less than or equal to one) but it will have the important property of mapping the Schmidt coefficients of $\ket{\Phi_N}$ to those of $\ket{\Psi_N}$.

Define for all $z\in s_x$, $z'\in s_{x'}$ where $x,x'\in\left\{0,\dots,n\right\}$:
\begin{equation}
R\left(i,z|j,z'\right)=P\left(i,x|j,x'\right) u^{-x}\left(u-1\right)^{x-n}.
\end{equation}
$R\left(i,z|j,z'\right)$ is a square, doubly sub-stochastic matrix. To see this note that if we had not truncated the range of $x$ and $x'$ to lie in $\left\{0,\dots,n\right\}$ and assumed that the degeneracy of $z\in s_y$ was $u^{y}\left(u-1\right)^{n-y}$ (regardless of the fact that this does not make much sense for $y<0$ or $y>n$) we would have had from Eq.~\eqref{eq:P stoch} that:
\begin{align*}
\sum_{i=1}^{d}\sum_{z} R'\left(i,z|j,z'\right)&=\sum_{i=1}^{d}\sum_{x=-\infty}^{\infty}\sum_{z\in s_x} P\left(i,x|j,x'\right) u^{-x}\left(u-1\right)^{x-n}\\
&=\sum_{i=1}^{d}\sum_{x=-\infty}^{\infty}P\left(i,x|j,x'\right)\\
&=1
\end{align*}
and using Eq.~\eqref{eq:P gen gibbs approx} that:
\begin{align*}
\sum_{j=1}^{d}\sum_{z'}R'\left(i,z|j,z'\right)&=\sum_{j=1}^{d}\sum_{x'=-\infty}^{\infty}\sum_{z'\in s_{x'}} P\left(i,x|j,x'\right) u^{-x}\left(u-1\right)^{x-n}\\
&=\sum_{j=1}^{d}\sum_{x'=-\infty}^{\infty} P\left(i,x|j,x'\right) \left(\frac{u}{u-1}\right)^{x'-x}\\
&\leq1
\end{align*}
where $R'$ is the non-truncated version of $R$. 

While $R$ is not doubly stochastic, it does satisfy (using Eq.~\eqref{eq:P trans}):
\begin{align*}
&\sum_{j=1}^{d}\sum_{x'=\frac{n-N}{2}-a^{u}_{\textrm{max}}}^{\frac{n+N}{2}+a^{u}_{\textrm{max}}}\sum_{z\in s_{x'}} R\left(i,z|j,z'\right) \frac{q_j}{N+1} u^{-x'}\left(u-1\right)^{x'-n} \delta_{x'\in\left\{\frac{n-N}{2},\dots,\frac{n+N}{2}\right\}}\\
=&\sum_{j=1}^{d} \sum_{x'=-\infty}^{\infty}P\left(i,x|j,x'\right)\frac{q_j}{N+1} u^{-x}\left(u-1\right)^{x-n}\delta_{x'\in\left\{\frac{n-N}{2},\dots,\frac{n+N}{2}\right\}}\\
=&\sum_w \frac{p_{iw}}{N+1} u^{-x}\left(u-1\right)^{x-n} \delta_{x+a^{u}_{w}\in\left\{\frac{n-N}{2},\dots,\frac{n+N}{2}\right\}},
\end{align*}
i.e. it maps the Schmidt coefficients of $\ket{\Phi_N}$ to those of $\ket{\Psi_N}$.

Finally, for those $z'$ associated with $x'\in\left\{\frac{n-N}{2},\dots,\frac{n+N}{2}\right\}$ we have from Eq.~\eqref{eq:P stoch exact} that:
\begin{equation}
\sum_{i=1}^{d}\sum_{x=0}^{n}\sum_{z\in s_x} R\left(i,z|j,z'\right)=1,
\end{equation}
i.e. these columns do actually sum to 1.

\subsubsection{Construction of a doubly stochastic matrix}
\label{sec:doublystoch}
Finally we wish to construct a doubly stochastic matrix from $R$ which also maps the Schmidt coefficients of $\ket{\Phi_N}$ to those of $\ket{\Psi_N}$. This will directly imply the existence of the LOCC protocol, $\Lambda_N$, taking $\ket{\Psi_N}$ to $\ket{\Phi_N}$. We will denote this matrix by $\tilde{R}$ and construct it as follows:
\begin{enumerate}
\item For all $z'$ associated with $x'\in\left\{\frac{n-N}{2},\dots,\frac{n+N}{2}\right\}$, set:
\begin{equation}
\tilde{R}\left(i,z|j,z'\right):=R\left(i,z|j,z'\right).
\end{equation}
\item Define:
\begin{equation}
r_{i,z}=\sum_{j=1}^{d} \sum_{x'=\frac{n-N}{2}}^{\frac{n+N}{2}} \sum_{z'\in s_{x'}} R\left(i,z|j,z'\right).
\end{equation}
Then for $z'$ associated with $x'\notin\left\{\frac{n-N}{2},\dots,\frac{n+N}{2}\right\}$, set:
\begin{equation}
\tilde{R}\left(i,z|j,z'\right):=\frac{1-r_{i,z}}{d M},
\end{equation}
where $M:=\sum_{x'=0}^{\frac{n-N}{2}}u^{x'}\left(u-1\right)^{n-x'}+\sum_{x'=\frac{n+N}{2}}^{n}u^{x'}\left(u-1\right)^{n-x'}$ so $d M$ is the number of columns of $\tilde{R}$ not contained in the support of the Schmidt coefficients of $\ket{\Phi_N}$. In other words, this procedure evenly distributes the deficit in each row amongst the columns of $\tilde{R}$ not contained in the support of the Schmidt coefficients of $\ket{\Phi_N}$.
\end{enumerate}
Now, for $z'$ associated with $x'\in\left\{\frac{n-N}{2},\dots,\frac{n+N}{2}\right\}$, we have:
\begin{equation}
\sum_{i=1}^{d}\sum_{x=0}^{n}\sum_{z\in s_x}\tilde{R}\left(i,z|j,z'\right)=1,
\end{equation}
while for all other $z'$:
\begin{align*}
\sum_{i=1}^{d}\sum_{x=0}^{n}\sum_{z\in s_x}\tilde{R}\left(i,z|j,z'\right)&=\sum_{i=1}^{d}\sum_{x=0}^{n}\sum_{z\in s_x}\frac{1-r_{i,z}}{d M}\\
&=\frac{d M_T}{d M}-\frac{1}{d M}\left(d M_T-d M\right)\\
&=1,
\end{align*}
where $M_T:=\sum_{x=0}^{n}u^x\left(u-1\right)^{n-x}$ so $d M_T$ is the total number of columns/rows in $\tilde{R}$. Hence $\tilde{R}$ is stochastic.

By construction, we have that:
\begin{equation}
\sum_{j=1}^{d}\sum_{x'=0}^{n}\sum_{z'\in s_{x'}} \tilde{R}\left(i,z|j,z'\right)=1.
\end{equation}
Hence $\tilde{R}$ is doubly stochastic.

Finally, as we have not altered the columns in the support of $\Phi_N$, we have that:
\begin{align*}
&\sum_{j=1}^{d}\sum_{x'=\frac{n-N}{2}-a^{u}_{\textrm{max}}}^{\frac{n+N}{2}+a^{u}_{\textrm{max}}}\sum_{z\in s_{x'}} \tilde{R}\left(i,z|j,z'\right) \frac{q_j}{N+1} u^{-x'}\left(u-1\right)^{x'-n} \delta_{x'\in\left\{\frac{n-N}{2},\dots,\frac{n+N}{2}\right\}}\\
=&\sum_w \frac{p_{iw}}{N+1} u^{-x}\left(u-1\right)^{x-n} \delta_{x+a^{u}_{w}\in\left\{\frac{n-N}{2},\dots,\frac{n+N}{2}\right\}},
\end{align*}
so $\tilde{R}$ maps the Schmidt coefficients of $\ket{\Phi_N}$ to those of $\ket{\Psi_N}$.
Using the results of \cite{Nielsen-pure-entanglement}, the existence of such a $\tilde{R}$ implies that we have an LOCC protocol that converts $\ket{\Psi_N}$ into $\ket{\Phi_N}$.

\section{The large \texorpdfstring{$N$}{N} limit of BLOCC protocols. }
\label{ssec:large n}

Here we show that the state $\ket{\Psi_N}$ given in Eq.~\eqref{eq:initialN}:
\begin{equation}
\ket{\Psi_N}=\sum_{i=1}^{d}\sum_{x=0}^{n} \sqrt{\sum_w \frac{p_{iw}}{N+1} \delta_{x+a^{u}_{w}\in\left\{\frac{n-N}{2},\dots,\frac{n+N}{2}\right\}}} \ket{i,i}\otimes \ket{e_x},
\end{equation}
 tends to a state that is product between system and battery in the limit of large $N$. To see this, consider the overlap between $\ket{\Psi_N}$ and the state:
\begin{equation}\label{eq:tpsin}
\ket{\tilde{\Psi}_N}=\sum_{i=1}^{d}\sum_{x=0}^{n} \sqrt{\frac{p_i}{n+1}}\ket{ii}\otimes \ket{e_x},
\end{equation}
It is given by:
\begin{align}
\braket{\tilde{\Psi}_N}{\Psi_N}=&\sum_{i=1}^{d}\sum_{x=0}^{n}\sqrt{\frac{p_i}{n+1}\sum_w \frac{p_{iw}}{N+1}\delta_{x+a^{u}_{w}\in\left\{\frac{n-N}{2},\dots,\frac{n+N}{2}\right\}}}\nonumber \\
\geq& \sum_{i=1}^{d}\sum_{x=\frac{n-N}{2}+a^{u}_{\textrm{max}}}^{\frac{n+N}{2}-a^{u}_{\textrm{max}}} \sqrt{\frac{p_i^2}{\left(n+1\right)\left(N+1\right)}}\nonumber \\
\geq& \frac{n+1-2a^{u}_{\textrm{max}}}{n+1}\nonumber \\
=&\frac{N+1}{N+1+2a^{u}_{\textrm{max}}}\label{eq:finitebound} \\ \nonumber
\stackrel{N\rightarrow\infty}{\longrightarrow}&1.
\end{align}
Hence, the fidelity between $\ket{\tilde{\Psi}_N}$ and $\ket{\Psi_N}$ tends to 1. Thus, in the limit of large $N$, the initial state of the system tends to the pure state $\ket{\psi}=\sum_{i=1}^{d}\sqrt{p_i}\ket{ii}$, the reduced state of the system in $\ket{\tilde{\Psi}_N}$.

Note that given a protocol $\Lambda_N$ that converts $\ket{\Psi_N}$ to $\ket{\Phi_N}$, if we apply $\Lambda_N$ to $\ket{\tilde{\Psi}_N}$ we will in general create a mixed state $\tilde{\sigma}_N$. However, as the fidelity is non-decreasing under the application of quantum channels, the fidelity between $\ket{\Phi_N}$ and $\tilde{\sigma}_N$ will also tend to 1 with increasing $N$ and in addition the reduced state of $\tilde{\sigma}_N$ on the system $AB$ will be increasingly close in fidelity to $\ket{\phi}_{AB}$.

We can also consider the closeness of the probability distribution:
\begin{equation}
\tilde{P}\left(i,x,j,x'\right)=\tr\left[\ketbra{j}{j}\otimes P_{x'} \sum_m M_m \left(\ketbra{i}{i}\otimes P_x\right) \tilde{\Psi}\left(\ketbra{i}{i}\otimes P_x\right) M_m^\dagger \right]
\end{equation}
to
\begin{equation}
{P}\left(i,x,j,x'\right)=\tr\left[\ketbra{j}{j}\otimes P_{x'} \sum_m M_m \left(\ketbra{i}{i}\otimes P_x\right) {\Psi}\left(\ketbra{i}{i}\otimes P_x\right) M_m^\dagger \right].
\end{equation}
If the trace distance between $\tilde{\Psi}$ and $\Psi$, $D\left(\Psi,\tilde{\Psi}\right)$, is $\epsilon$, then as for general $\rho$ and $\sigma$ we have $D\left(\rho,\sigma\right)=\max_{\left\{E_m\right\}} D\left(s_m,t_m\right)$ (where the maximisation is over all POVMS and where $s_m:=\tr\left[M_m \rho\right]$ and $t_m:=\tr\left[M_m \sigma\right]$), we have that:
\begin{equation}
\sum_{i,j,x,x',m} \left|\tilde{P}\left(i,x,j,x',m\right)-P\left(i,x,j,x',m\right)\right|\leq \epsilon
\end{equation}
where
\begin{equation}
{P}\left(i,x,j,x',m\right)=\tr\left[\ketbra{j}{j}\otimes P_{x'} M_m \left(\ketbra{i}{i}\otimes P_x\right) {\Psi}\left(\ketbra{i}{i}\otimes P_x\right) M_m^\dagger \right]
\end{equation}
and $\tilde{P}\left(i,x,j,x',m\right)$ is similarly defined. This implies that
\begin{equation}
\sum_{x} \left|\tilde{P}\left(i,x,j,x'\right)-P\left(i,x,j,x'\right)\right|\leq \epsilon
\end{equation}
and finally that
\begin{equation}
\left|\tilde{P}\left(i,j,w\right)-P\left(i,j,w\right)\right|\leq \epsilon
\end{equation}
so we obtain a similar distribution when applying the LOCC protocol to $\ket{\tilde{\Psi}}$ as if we had applied it to $\ket{{\Psi}}$.

\section{The 2nd law equality for entanglement} \label{sec:2nd eq}

In this section we give the proof of Result~\ref{res:2nd eq}. 

\begin{theorem}
Given an initial state $\ket{\psi}$ with Schmidt coefficients $p_i$ and a target state $\ket{\phi}$ with coefficients $q_j$, the distribution of entanglement that can be coherently extracted in converting $\ket{\psi}$ into $\ket{\phi}$ under BLOCC satisfies:
\begin{equation}
\left\langle 2^{w-\log q_j+\log p_i}\right\rangle=1.
\end{equation}
\end{theorem}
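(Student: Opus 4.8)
The plan is to obtain Eq.~\eqref{eq:entSLE} as a direct algebraic consequence of the three necessary conditions of Result~\ref{result:bat-stochastic}, Eqs.~\eqref{eq:normalisation}--\eqref{eq:doesthetrick}. Since the forward BLOCC protocol implements $\ket\psi\to\ket\phi$, the conditional matrix $P(i,w|j)$ supplied by Result~\ref{result:bat-stochastic} has $j$ labelling the Schmidt basis of the \emph{target} state $\ket\phi$ and $i$ that of $\ket\psi$, so the joint distribution of the triple $(i,j,w)$ over which the average in Eq.~\eqref{eq:entSLE} is taken is $P(i,j,w)=P(i,w|j)\,q_j$. I would first record that this is a genuine probability distribution with the expected marginals: summing over $i,w$ and using Eq.~\eqref{eq:normalisation} gives the marginal $q_j$, while summing over $j,w$ and using Eq.~\eqref{eq:doesthetrick} gives the marginal $p_i$, consistent with the Schmidt coefficients of the initial and final states.

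With the joint distribution fixed, the computation is short. Writing out the average and using that $\log$ is base $2$ (so $2^{-\log q_j+\log p_i}=p_i/q_j$), the factors of $q_j$ cancel:
\begin{equation}
  \left\langle 2^{w-\log q_j+\log p_i}\right\rangle
  =\sum_{i,j,w} P(i,w|j)\,q_j\,2^{w}\,\frac{p_i}{q_j}
  =\sum_i p_i \sum_{j,w} P(i,w|j)\,2^{w}.
\end{equation}
The inner sum equals $1$ for every $i$ by the $w$-stochasticity condition Eq.~\eqref{eq:w-stochasticity}, so the expression collapses to $\sum_i p_i=1$, which is the claim.

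There is essentially no analytic obstacle: the only points requiring care are bookkeeping ones. One must make sure the average in Eq.~\eqref{eq:entSLE} is interpreted with respect to $P(i,j,w)=P(i,w|j)\,q_j$ rather than some other joining of the two marginals, and keep straight which index ($i$ or $j$) refers to the initial versus the final state --- a place where the entanglement setting genuinely differs from the thermodynamic one, since here the doubly-stochastic matrix maps final states to initial states. One should also note that only the \emph{necessity} direction of Result~\ref{result:bat-stochastic} is invoked: given that $\ket\psi\to\ket\phi$ is achievable under BLOCC extracting $\{P(w),w\}$, such a $P(i,w|j)$ is guaranteed to exist, and the identity above follows. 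As a remark, the average inequality of Result~\ref{res:rev} and the higher-order bounds of Eq.~\eqref{eq:infbounds} are then read off by Jensen's inequality and a Taylor expansion of this identity, exactly as in \cite{alhambra2016fluctuating}.
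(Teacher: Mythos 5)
Your proposal is correct and follows essentially the same route as the paper's proof in Appendix~\ref{sec:2nd eq}: both identify the joint distribution as $P(i,j,w)=P(i,w|j)\,q_j$, cancel the $q_j$ against the $2^{-\log q_j}$ factor, and reduce the average to $\sum_i p_i\sum_{j,w}P(i,w|j)2^w=\sum_i p_i=1$ via Eq.~\eqref{eq:w-stochasticity}. Your explicit checks of the marginals and of which index labels the initial versus final state are welcome clarifications of details the paper leaves implicit, but they do not change the argument.
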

\begin{proof}
The proof follows straightforwardly from the constraints on stochastic matrices from Result \ref{result:bat-stochastic}.
As we are considering BLOCC protocols, Eq.~\eqref{eq:w-stochasticity} holds:
\begin{equation*}
\sum_{j,w} P\left(i,w|j\right)2^w=1,\quad \forall i.
\end{equation*}

Multiplying this equation by $p_i$ and summing over $i$ then gives (with a small rewriting of the conditional probability distribution):
\begin{equation*}
\sum_{i,j,w} P\left(i,j,w\right) \frac{p_i}{q_j}2^w=1.
\end{equation*}
Moving the probabilities into the exponent then gives the result.
\end{proof}

\section{A quantitative third law of entanglement}
\label{sec:third}

Here we give the detailed proof of Result \ref{re:3rdlaw}.
From the majorization criterion, we know that in pure to pure transitions, the Schmidt rank cannot increase, not even probabilistically \cite{vidal1999entanglement}. This is essentially the analogue of a number of results in thermodynamics associated with the 3rd law, where a general statement is that decreasing the rank of a state requires infinite resources, in the form of infinite work fluctuations, an infinite-sized bath or both \cite{masanes2017general} . The particular question that this answers is: what is the infinite resource involved in a potential increase of Schmidt rank? 

\begin{theorem}
	Let $p_{\text{min}}$ and $q_{\text{min}}$ be the smallest Schmidt coefficients of the initial and final states of the system. The entanglement fluctuations are bounded by
	\begin{equation}
	\sum_w 2^w \ge \frac{q_{\text{min}}}{d' p_{\text{min}}},
	\end{equation}
	where $d'$ is the number of nonzero Schmidt coefficients of the final state.
\end{theorem}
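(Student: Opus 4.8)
The plan is to read the bound straight off the necessary-and-sufficient conditions of Result~\ref{result:bat-stochastic}, specialized to the row of the stochastic matrix attached to the smallest input Schmidt coefficient. Concretely, I would fix an index $i^\star$ with $p_{i^\star}=p_{\text{min}}$ and then use only Eqs.~\eqref{eq:w-stochasticity} and~\eqref{eq:doesthetrick} evaluated at $i=i^\star$, together with positivity of $P(i,w|j)$.

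First I would bound the weight that $P(i^\star,w|j)$ can carry for a single value of $j$. Since the sum over $j$ in Eq.~\eqref{eq:doesthetrick} runs over the $d'$ indices with $q_j>0$, all of which obey $q_j\geq q_{\text{min}}$, that equation gives, for every fixed $j$,
\[
p_{\text{min}}=\sum_{j',w}P(i^\star,w|j')\,q_{j'}\ \geq\ q_j\sum_{w}P(i^\star,w|j)\ \geq\ q_{\text{min}}\sum_{w}P(i^\star,w|j),
\]
so $\sum_{w}P(i^\star,w|j)\leq p_{\text{min}}/q_{\text{min}}$. Next, Eq.~\eqref{eq:w-stochasticity} at $i^\star$ reads $\sum_{j,w}P(i^\star,w|j)\,2^{w}=1$, and since there are only $d'$ admissible values of $j$, by pigeonhole there is some $j^\star$ with $\sum_{w}P(i^\star,w|j^\star)\,2^{w}\geq 1/d'$. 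Combining,
\[
\frac{1}{d'}\ \leq\ \sum_{w}P(i^\star,w|j^\star)\,2^{w}\ \leq\ \Big(\max_{w}2^{w}\Big)\sum_{w}P(i^\star,w|j^\star)\ \leq\ \Big(\max_{w}2^{w}\Big)\,\frac{p_{\text{min}}}{q_{\text{min}}},
\]
where the maximum runs over the $w$ occurring with nonzero probability in the protocol. This yields $\max_{w}2^{w}\geq q_{\text{min}}/(d'p_{\text{min}})$, and since $2^{w}>0$ for every such $w$ we have $\sum_{w}2^{w}\geq\max_{w}2^{w}$, which is Eq.~\eqref{eq:3rdlawin}. (Using the full sum over $j$ in the last inequality rather than a single $j^\star$ would even replace $d'$ by $1$; I would keep the pigeonhole form to match the statement exactly.)

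The only point needing care is the bookkeeping between the conditional distribution $P(i,w|j)$ and the marginal $P(w)$: one must check that any $w$ with $P(i^\star,w|j^\star)>0$ genuinely contributes to $\sum_{w}2^{w}$, which it does because $P(w)=\sum_{i,j}q_j\,P(i,w|j)\geq q_{j^\star}P(i^\star,w|j^\star)>0$. A secondary, purely technical point is that Result~\ref{result:bat-stochastic} is the $\epsilon\to0$, $u\to\infty$ idealization, so for full rigor one would run the identical chain of inequalities at finite battery size using the discretized increments $a^u_w\log\!\big(u/(u-1)\big)$ and Eq.~\eqref{eq:gen gibbs approx} of Appendix~\ref{sec:suff}, then take limits; all the inequalities are monotone in the error parameters, so nothing new arises. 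Beyond that the argument is a two-line consequence of the constraints, and I do not anticipate any real obstacle.
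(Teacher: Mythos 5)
Your proof is correct and follows essentially the same route as the paper's: both combine Eq.~\eqref{eq:doesthetrick} and Eq.~\eqref{eq:w-stochasticity} at the row $i^\star$ with $p_{i^\star}=p_{\min}$, the paper via the termwise bound $P(i_0,w|j)\,q_j\le p_{\min}$ and you via the per-$j$ column-sum bound plus a pigeonhole over the $d'$ admissible $j$. Your variant in fact yields the slightly stronger intermediate conclusion $\max_w 2^w\ge q_{\min}/(d'p_{\min})$, and your parenthetical observation is right that bounding the full double sum $\sum_{j,w}P(i^\star,w|j)\le p_{\min}/q_{\min}$ directly from Eq.~\eqref{eq:doesthetrick} removes the factor $d'$ altogether.
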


\begin{proof}
	We start from Eq. \eqref{eq:doesthetrick}
	\begin{equation}
	\sum_{j,w} P\left(i_0,w|j\right)q_j=p_{\text{min}},
	\end{equation}
	with $p_{i_0}=p_{\text{min}}$. From this we can write
	\begin{equation}
	P\left(i_0,w|j\right) q_j \le p_{\text{min}} \,\,\,\, \forall j.
	\end{equation}
	Plugging this in Eq. \eqref{eq:w-stochasticity} we obtain
	\begin{equation}
	\sum_{j,w} \frac{p_{\text{min}}}{q_j}2^w \ge 1,
	\end{equation}
	from which it follows that 
	\begin{equation}
	d'\frac{p_{\text{min}}}{q_{\text{min}}}\sum_w 2^w \ge 1,
	\end{equation}
	finishing the proof.
\end{proof} 
From this, we see that if we start in a state with a small lowest probability and transform it into one in which the probabilities are more uniform, either the magnitude of the biggest entanglement fluctuations will have to be very large, or there will be a large number of fluctuations. In particular, we see that in the limit in which we are increasing the Schmidt rank (that is, when $p_{\text{min}}\rightarrow 0$), the amount of entanglement which might be required, must diverge.

It is important to notice that our framework, the accuracy of the approximations is limited by the magnitude of the biggest work fluctuation $w_{\text{max}}$, as in Eq. \eqref{eq:finitebound}. On top of that, the number of possible work fluctuations is limited by the size of the battery we are using. These two factors hence limit how much can we change a very small Schmidt coefficient.

\section{Jarzynski's equality for entanglement} \label{sec:jar}

In this section we give the proof of Result~\ref{res:jar}.

\begin{theorem}
When the final state is a maximally entangled states of dimension $d'$, we have:
\begin{equation}
\left\langle 2^w \right\rangle = \frac{d}{d'},
\end{equation}
with $d$ the dimension of the support of the initial state.
\end{theorem}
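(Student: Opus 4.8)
The plan is to obtain this identity as an immediate consequence of the $w$-stochasticity constraint in Result~\ref{result:bat-stochastic}, exactly in the spirit of the proof of Result~\ref{res:2nd eq} in Appendix~\ref{sec:2nd eq}. The only new input is the structure of the target: since $\ket{\phi}$ is maximally entangled of dimension $d'$, all its Schmidt coefficients coincide, $q_j = 1/d'$ for $j = 1,\dots,d'$. So the strategy is: write the average over the joint distribution, insert $q_j = 1/d'$, factor it out, and recognize what is left.

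Concretely, I would first fix a BLOCC protocol achieving $\ket{\psi}\to\ket{\phi}$ and let $P(i,w|j)$ be the associated conditional distribution guaranteed by Result~\ref{result:bat-stochastic}, with joint distribution $P(i,j,w) = P(i,w|j)\,q_j$. Then
\begin{equation*}
  \left\langle 2^w \right\rangle
  = \sum_{i,j,w} P(i,w|j)\,q_j\,2^w
  = \frac{1}{d'}\sum_{i}\Bigl(\sum_{j,w} P(i,w|j)\,2^w\Bigr).
\end{equation*}
By Eq.~\eqref{eq:w-stochasticity} the inner sum equals $1$ for every $i$, and the outer sum runs over precisely the $d$ indices for which $p_i > 0$, i.e.\ over the support of $\psi_A$. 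Hence $\left\langle 2^w \right\rangle = d/d'$, as claimed.

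There is essentially no obstacle here: the statement is a one-line corollary of Result~\ref{result:bat-stochastic} once the uniformity of the $q_j$ is used. The only point that deserves a word is that Eq.~\eqref{eq:w-stochasticity} is known to hold (in the $\epsilon\to 0$ limit established in Appendix~\ref{sec:necc}) for each $i$ in the support of the initial state, and that this support has dimension exactly $d$ — which is what makes the outer sum evaluate to $d$ rather than to the ambient Hilbert-space dimension. Note also that, just as in the thermodynamic Jarzynski equality~\eqref{eq:thermo-jar}, sufficiency of Result~\ref{result:bat-stochastic} guarantees that such a $P(i,w|j)$ indeed exists, so the identity is not vacuous.
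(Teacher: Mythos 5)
Your proof is correct and follows essentially the same route as the paper's Appendix~\ref{sec:jar}: both start from Eq.~\eqref{eq:w-stochasticity}, insert the uniform Schmidt coefficients $q_j=1/d'$ to pass to the joint distribution, and sum over the $d$ indices $i$ in the support of the initial state to obtain $d/d'$. Your remark that the outer sum runs over exactly the support of $\psi_A$ makes explicit the step the paper writes as $\sum_i 1/d' = d/d'$.
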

\begin{proof}
We again start from Eq.~\eqref{eq:w-stochasticity}:
\begin{equation*}
\sum_{j,w} P\left(i,w|j\right)2^w=1,\quad \forall i.
\end{equation*}
We have that $P\left(i,w|j\right)\frac{1}{d}=P\left(i,w,j\right)$. Hence if we multiply both sides with $\frac{1}{d'}$ we obtain
\begin{equation*}
\sum_{j,w} P\left(i,w,j\right)2^w=\frac{1}{d'},\quad \forall i.
\end{equation*}
We now sum over the index $i$, to obtain
\begin{equation*}
\sum_{i,j,w} P\left(i,w,j\right)2^w=\sum_i \frac{1}{d'}=\frac{d}{d'}.
\end{equation*}
\end{proof}

\section{Reversed transformations and an entanglement Crooks theorem}
\label{sec:crooks}

\subsection{Crooks' fluctuation theorem}\label{sec:appcrooks}

 Here we outline an important result in statistical mechanics for which we are giving an entanglement analogue. This is Crooks' theorem, first shown for classical settings in the seminal paper \cite{crooks1999entropy}, and later extended to quantum systems by Tasaki \cite{tasaki2000jarzynski}. 

The setting is as follows: a system is in an initial thermal state $\frac{e^{-\beta H}}{Z}$, with $Z=\tr\left[e^{-\beta H}\right]$. It is then taken out of equilibrium through some particular protocol (for instance, an unitary process). An amount of work $W$ is consumed in the process, and this quantity can vary within different runs, giving rise to a probability distribution $P(W)$. At the end of the protocol, the Hamiltonian of the system may have changed to $H'$.

Then the system is reset to the new thermal equilibrium $\frac{e^{-\beta H'}}{Z'}$, and a \emph{time-reversed} protocol is applied to it, extracting a work distribution $P^{\text{rev}}(-W)$.

Crooks' theorem then relates the two work distributions via the following relation
\begin{equation}\label{eq:crooksthermo}
\frac{P(W)}{P^{\text{rev}}(-W)}=e^{-\beta W} \frac{Z'}{Z} .
\end{equation}
 This is thus a relation between the work extraction of two different processes starting from thermal equilibrium. It expresses the fact that extracting positive work along a process has a probability which is exponentially suppressed with respect to that of extracting a negative amount of work in the reversed process. This way, it can be understood as a quantitative statement of the irreversibility of thermodynamics.



\subsection{Reversed LOCC}\label{sec:reversed}

We now proceed to define the analogue of the reversed LOCC protocol from which we will derive an entanglement version of Crooks' theorem. The idea is to define a protocol in which the unitaries performed by Bob are not given by $V_m$ but by their conjugates $V^\dagger_m$. This then yields a simple relation between the stochastic matrices that correspond to each of the two processes..

Let us start with Eq.~\eqref{eq:forwardp1} from Section \ref{sec:necc}:
\begin{equation}\label{eq:forwardp}
(M_m \otimes V_m) 
\ket{\Psi}_{ABA'B'} 
=
\sqrt {r_m}\,  
\ket{\Phi}_{ABA'B'}\ .
\end{equation}
We can rewrite this equation as
\begin{equation}
\left(\left(M_m \sqrt{\Psi_{AA'}}\right) \otimes V_m\right) 
\ket{\xi}_{ABA'B'} 
=
\sqrt {r_m}\left(\sqrt{\Phi_{AA'}}\otimes \id \right) 
\ket{\xi'}_{ABA'B'}\,
\end{equation}
where $\ket{\xi}_{ABA'B'}=\sum_l \ket{ll}_{ABA'B'}$ and $\ket{\xi'}_{ABA'B'}=\sum_l \ket{l'l'}_{ABA'B'}$ are the un-normalized maximally entangled states in the Schmidt basis of the initial and final states. 

 Thus we also have (using the Choi-Jamio\l{}kowski isomorphism)
\begin{equation}
M_m \sqrt{\Psi_{AA'}} \sum_l \ketbra{l}{l}_{AA'} V^\dagger_m
=
\sqrt {r_m}\,\sqrt{\Phi_{AA'}} \sum_{l'} \ketbra{l'}{l'}_{AA'},
\end{equation}
which is equivalent to the following operator identity in the Hilbert space of $AA'$ 
\begin{equation}\label{eq:polardec}
M_m \sqrt{\Psi_{AA'}}
=
\sqrt {r_m}\,\sqrt{\Phi_{AA'}} V_m .
\end{equation}
This gives the polar decomposition of the operator $M_m \sqrt{\Psi_{AA'}}$.

Let us now choose an arbitrary state $\Phi'_{AA'}$ which commutes with $\Psi_{AA'}$ (such that they have the same Schmidt basis), and define $\Psi'_{AA'}$ as
\begin{equation}\label{eq:revmixture}
\Psi'_{AA'}=\sum_m r_m V_m \Phi'_{AA'} V^\dagger_m.
\end{equation}
Note that $\Psi'_{AA'}$ has the same eigenbasis as $\Phi_{AA'}$ while $\Phi'_{AA'}$ with the same eigenbasis as $\Psi_{AA'}$. This follows from the fact that the unitaries $V_m$ map between the two bases, as can be seen from Eqs. \eqref{eq:forwardp} and \eqref{eq:polardec}.

These new states allow us to define the following set of positive operators
\begin{equation}
M^{\text{rev}}_m := \sqrt{r_m} \sqrt{\Phi'_{AA'}} V^\dagger_m  \sqrt{\Psi'_{AA'}}^{-1},
\end{equation}
where $\sqrt{\Psi'_{AA'}}^{-1}$ has non-zero support on the support of $\Psi'_{AA'}$ only, the projector onto which we define as $\Pi_{\Psi'_{AA'}}$. 

It is straightforward to check that
\begin{equation}
\sum_m M^{\text{rev} \dagger}_m  M^{\text{rev}}_m =\Pi_{\Psi'_{AA'}}.
\end{equation}
Thus, together with the projector $\id-\Pi_{\Psi'_{AA'}}$, they form a valid POVM. 
They also satisfy the following identity:
\begin{equation}
M^{\text{rev} \dagger}_m \Psi'_{AA'} M^{\text{rev}}_m = \sqrt{r_m} \Phi'_{AA'},\quad\forall m.
\end{equation}
Note that the outcome given by $\id-\Pi_{\Psi'_{AA'}}$ never occurs when acting on $\Psi'_{AA'}$.

We now show that these measurement operators give a pure state when applied to $\ket{\Psi}_{AA'BB'}$. Note that
\begin{align}
	\ket{\Phi'_m}&=\frac{1}{\sqrt{r_m}}\left(M_m^\text{rev}\otimes \id\right) \ket{\Psi'}_{AA'BB'} \\ 
	&=\left(\sqrt{ {\Phi'}_{AA'}} \otimes \id\right) \left( V_m^{\dagger}\otimes \id\right) \sum_{l'}\ket{l'l'}_{ABA'B'}
	\label{eq:mtrans}
\end{align}
so we see that to obtain $\ket{\Phi'}_{AA'BB'}$ Bob has to implement the particular unitary that maps the initial to the final Schmidt basis. This unitary is  $V^\dagger_m$, as seen in Eq. \eqref{eq:mtrans}. Hence we have
\begin{align}
\id \otimes V^\dagger_m	\ket{\Phi'_m} & =\sqrt{ {\Psi'}_{AA'}} \otimes \id \sum_{l'} V_m^{\dagger}\otimes V_m^\dagger \ket{l'l'}_{ABA'B'}\\
& = \sqrt{ {\Phi'}_{AA'}} \otimes \id \sum_{l} \ket{ll}_{ABA'B'}= \ket{\Psi'}_{AA'BB'}.
\end{align}
Note that the second line follows from the first because of the definition of $V^\dagger_m$ in Eq.\eqref{eq:forwardp} we know that $V_m^{\dagger}\otimes V_m^\dagger$ is the unitary that maps the initial to the final Schmidt basis in both Alice and Bob simultaneously (up to a permutation of the elements which may depend on $m$).

We have thus shown that
	\begin{equation}\label{eq:backprotocol}
	M^{\text{rev}}_m \otimes V^\dagger_m \ket{\Psi'}_{ABA'B'}=\sqrt{r_m} \ket{\Phi'}_{ABA'B'},\quad \forall m
	\end{equation}
That is, we have defined a reversed LOCC protocol that takes state $\ket{\Psi'}_{AA'BB'}$ to $\ket{\Phi'}_{AA'BB'}$. In the next section we move onto investigating the relationship between the original LOCC transformation and this reversed protocol.

\subsection{Crooks' theorem for entanglement}\label{sec:crooks2}

In this section we show how the notion of the reversed protocol allows us to derive an entanglement analogue of Crooks theorem. For simplicity, we shall assume that for the work distributions extracted, $\frac{w}{\delta w}$ is an integer.

Let us assume that we have a sequence of forward protocols that takes $\ket{\Psi_N} \rightarrow \ket{\Phi_N}$ as defined in Eq.~\eqref{eq:initialN} and Eq.~\eqref{eq:finalN}, in which there is a work distribution. The results in Section \ref{sec:necc} imply that we can find a matrix $P\left(i,w|j\right)$. After that, we can use the results of Section \ref{sec:constructionlambda} to define $P\left(i,x|j,x'\right)$, $R\left(i,z|j,z'\right)$ and $\tilde R\left(i,z | j,z'\right)$. The matrix $\tilde R\left(i,z | j,z'\right)$ is the stochastic matrix that changes the Schmidt coefficients of the final state to those of the initial state, and thus we can write it as
\begin{equation}
\tilde R\left(i,z | j,z'\right)=\tr\left[\left(\proj{i}\otimes \proj{z}\right) \Lambda^*_{BB'} \left(\proj{j}\otimes \proj{z'}\right)\right],
\end{equation}
where $\Lambda^*_{BB'}\left(\cdot\right)$ is defined as in Eq. \eqref{eq central}. 

Let us now define the following matrix
\begin{equation}
Q\left(i,x | j,x'\right)=\tr\left[\left(\proj{i}\otimes P_x\right) \Lambda^*_{BB'} \left(\proj{j}\otimes s_{x'}\right)\right].
\end{equation}
For the range of $x'$ in which the battery has support, that is $x' \in \left\{\frac{n-N}{2}...\frac{n+N}{2} \right\}$, this is related to $P\left(i,x | j,x'\right)$ as
\begin{align}
Q\left(i,x | j,x'\right)&=\sum_{z \in s_x,z' \in s_{x'}} u^{-x'}\left(u-1\right)^{x'-n} \tr \left[\left(\proj{i}\otimes \proj{z}\right) \Lambda ^*_{BB'} \left(\proj{j}\otimes \proj{z'}\right)\right] \\ &=
\sum_{z \in s_x,z' \in s_{x'}} \tilde R \left(i,z | j,z'\right) \\ &=
\sum_{z \in s_x,z' \in s_{x'}} u^{-x'}\left(u-1\right)^{x'-n} P(i,x | j,x') \\ &=
P\left(i,x |j,x'\right).
\end{align}
The step from the second to the third line only holds in that particular range of $z'$ (or rather, $x'$) specified above, in which $\tilde R(i,z|j,z') =R(i,z|j,z')$. 

Thus, using Eq. \eqref{eq:pix}, we have that
\begin{equation}\label{eq:qpw}
Q(i,x | j,x')=\sum_w P(i,w |j) \delta_{x'-x,\frac{w}{\delta w}}.
\end{equation}
within this range of $x'$.

In the previous section we have seen that the reversed LOCC protocol is such that the mixture of unitaries is the dual. This motivates the following definition
\begin{align}
Q^{\text{rev}}\left(j,x' | i,x\right)&=\tr\left[\left(\proj{j}\otimes P_{x'}\right) \Lambda_{BB'} \left(\proj{i} \otimes s_x\right)\right].
\end{align}
This matrix satisfies:
\begin{align}\label{eq:precrooks}
Q^{\text{rev}}\left(j,x' | i,x\right)&=\frac{u^{x'} \left(u-1\right)^{n-x'}}{u^{x} \left(u-1\right)^{n-x}} Q\left(i,x|j,x'\right)
\end{align}

Using Eq. \eqref{eq:qpw}, if we assume that $x'$ is within the range in which the battery has support, that is $x' \in \left\{\frac{n-N}{2}....\frac{n+N}{2} \right\}$, we have that
\begin{align}\label{eq:qtoprev}
Q^{\text{rev}}(j,x' | i,x)&=\frac{u^{x'} \left(u-1\right)^{n-x'}}{u^{x} \left(u-1\right)^{n-x}} \sum_w P\left(i,w |j\right) \delta_{x'-x,\frac{w}{\delta w}} \\&=
\frac{u^{x'} \left(u-1\right)^{n-x'}}{u^{x} \left(u-1\right)^{n-x}} \sum_w  \delta_{x'-x,\frac{w}{\delta w}} \sum_{x''} \alpha_{x''}\tr\left[\left(\proj{i}\otimes P_{x''-\frac{w}{\delta w}}\right)\Lambda^*_{BB'}\left(\proj{j}\otimes s_{x''}\right)\right]
\\&=
\sum_w  \delta_{x'-x,\frac{w}{\delta w}} \sum_{x''} \left(\frac{u}{u-1}\right)^{\frac{w}{\delta w}}  \alpha_{x''}\tr\left[\left(\proj{i}\otimes P_{x''-\frac{w}{\delta w}}\right)\Lambda^*_{BB'}\left(\proj{j}\otimes s_{x''}\right)\right]
\\&=
\sum_w  \delta_{x'-x,\frac{w}{\delta w}} \sum_{x''}
\alpha_{x''}\tr\left[\left(\proj{j}\otimes P_{x''}\right)\Lambda_{BB'}\left(\proj{i}\otimes s_{x''-\frac{w}{\delta w}}\right)\right]
\\& \simeq
\sum_w  \delta_{x'-x,\frac{w}{\delta w}} \sum_{x''}
\alpha_{x''+\frac{w}{\delta w}}\tr\left[\left(\proj{j}\otimes P_{x''+\frac{w}{\delta w}}\right)\Lambda_{BB'}\left(\proj{i}\otimes s_{x''}\right)\right]
\\& \simeq
\sum_w  \delta_{x'-x,\frac{w}{\delta w}} \sum_{x''}
\alpha_{x''}\tr\left[\left(\proj{j}\otimes P_{x''+\frac{w}{\delta w}}\right)\Lambda_{BB'}\left(\proj{i}\otimes s_{x''}\right)\right]
\\&
\equiv \sum_w P^{\text{rev}}\left(j, -w |i\right) \delta_{x'-x,\frac{w}{\delta w}}, 
\end{align}
where the approximations are exact in the limit of an ideal battery, and in the last line we have defined $P^{\text{rev}}\left(j, -w |i\right)\equiv \sum_{x''}
\alpha_{x''}\tr\left[\left(\proj{j}\otimes P_{x''+\frac{w}{\delta w}}\right)\Lambda_{BB'}\left(\proj{i}\otimes s_{x''}\right)\right]$ in analogy to Eq.~\eqref{eq:relationP2}. This is such that
\begin{align}
\sum_{j,w} P^{\text{rev}}\left(j, -w |i\right)=1 \\
\sum_{i,w} P^{\text{rev}}\left(j, -w |i\right) 2^{-w}=1,
\end{align}
which follow from Eq.~\eqref{eq:normalisation} and Eq.~\eqref{eq:w-stochasticity}  respectively.
Because it is a stochastic matrix, it maps an arbitrary probability distribution to another
\begin{equation}
\sum_{i,w} P^{\text{rev}}\left(j, -w |i\right) q'_i=p'_{j,-w}. 
\end{equation}

To summarize:
given a sequence of forward LOCC protocols, with a matrix $Q(i,x|j,x')$ that maps the final state coefficients to the initial state ones, there exists a sequence of reversed protocols as defined in Section \ref{sec:reversed}, with a matrix given by $Q^{\text{rev}}(j, x' |i,x)$ that also maps from the final coefficients to the initial ones.
Let us now take a final state for the protocol to be
\begin{equation} \label{eq:Phi'}
\ket{\Phi'_N}=\sum_{i=1}^{d} \sum_{x=0}^n \sqrt{\frac{q'_i}{N'+1}}\delta_{x \in \left\{\frac{n-N'}{2},....,\frac{n+N'}{2}\right\}} \ket{ii} \otimes \ket{e_x},
\end{equation}
where $N'=N-2\left| \frac{w_{\text{max}}}{\delta w} \right|$, $\left| \frac{w_{\text{max}}}{\delta w} \right|$ is the absolute value of the integer corresponding to the biggest work fluctuation and $N$ is related to $n$ as per Section~\ref{sec:suff}. The reasoning behind choosing the battery support to be in terms of $N'$ rather than $N$ will be explained shortly.

These coefficients are mapped to the following initial state coefficients
\begin{align}
\sum_{i,x} Q^{\text{rev}}\left(j,x' | i,x\right) \frac{q'_i}{N'+1}\delta_{x \in \left\{\frac{n-N'}{2},....,\frac{n+N'}{2}\right\}} &= \sum_{i,x} \sum_w
P^{\text{rev}}\left(j,-w|i\right)\delta_{x-x',\frac{w}{\delta w}} \frac{q'_i}{N'+1}\delta_{x \in \left\{\frac{n-N'}{2},....,\frac{n+N'}{2}\right\}} \\&=
\sum_i \sum_w 
P^{\text{rev}}\left(j,-w|i\right) \frac{q'_i}{N'+1}\delta_{x'-\frac{w}{\delta w} \in \left\{\frac{n-N'}{2},....,\frac{n+N'}{2}\right\}} \\&=
\sum_w \frac{p'_{j,-w}}{N'+1}\delta_{x'-\frac{w}{\delta w}\in \left\{\frac{n-N'}{2},....,\frac{n+N'}{2}\right\}}
\end{align}
Thus in analogy with the results of Section \ref{sec:suff} we conclude that the sequence of reversed protocols maps from the following initial states
\begin{equation}\label{eq:initialstaterev}
\ket{\Psi'_N}=\sum_{j=1}^{d} \sum_{x'=0}^n \sqrt{\sum_w \frac{p'_{j,-w}}{N'+1}\delta_{x'-\frac{w}{\delta w}\in \{\frac{n-N'}{2},....,\frac{n+N'}{2}\}}}\ket{jj}\otimes \ket{e_{x'}},
\end{equation}
to the $\ket{\Phi'_N}$ given in Eq.~\eqref{eq:Phi'}. 
The correction in the support of the battery to the range $\left\{\frac{n-N'}{2},....,\frac{n+N'}{2}\right\}$ is such that in the sum Eq. \eqref{eq:initialstaterev} the variable $x'$ does not take values outside the range  $\left\{\frac{n-N}{2},....,\frac{n+N}{2}\right\}$, which is the condition needed for Eq. \eqref{eq:qtoprev} to hold.

We are now in a position to derive the analogue of Crooks' theorem. While in thermodynamics the derivation of Crooks requires that the initial states of both the forward and reverse protocols are thermal, for entanglement we need to take the final states of both protocols to be maximally entangled (though possibly of different dimensions) so $q_j=\frac{1}{d}$ and $q'_i=\frac{1}{d'}$.

The work distributions associated with the forward and reversed processes are then:
\begin{align}
P\left(w\right)&=\sum_{i,j} P\left(i,w |j\right) \frac{1}{d} \\
P^{\text{rev}}\left(-w\right)&=\sum_{i,j} P^{\text{rev}}\left(j,-w |i\right) \frac{1}{d'}.
\end{align}
Following from Eq.~\eqref{eq:precrooks}, it can be seen that they obey the relation 
\begin{equation}
\frac{P\left(w\right)}{P^{\text{rev}}\left(-w\right)}=2^{-w} \frac{d'}{d}.
\end{equation}

We have thus shown:
\begin{theorem}
Suppose $\ket{\psi}\stackrel{\textrm{BLOCC}}{\longrightarrow}\ket{\textrm{ebit}_d}$ while extracting entanglement $\left\{P\left(w\right),w\right\}$. Then there exists a state $\ket{\psi'}$ such that $\ket{\psi'}\stackrel{\textrm{BLOCC}^{\textrm{rev}}}{\longrightarrow}\ket{\textrm{ebit}_{d'}}$ while extracting entanglement $\left\{P^{\textrm{rev}}\left(-w\right),w\right\}$ and where:
\begin{equation}
\frac{P\left(w\right)}{P^{\text{rev}}\left(-w\right)}=2^{-w} \frac{d'}{d}.
\end{equation}
\end{theorem}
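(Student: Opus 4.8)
The plan is to combine the structural relation between the forward and reverse stochastic matrices established in Section~\ref{sec:reversed} with a specialization to maximally entangled final states, exactly mirroring how Crooks' thermodynamic relation~\eqref{eq:crooksthermo} follows once both the forward and time-reversed processes start from thermal equilibrium. Here the role of ``thermal'' is played by ``maximally entangled,'' and, consistently with the reversal of initial and final states that runs through the paper, it is the \emph{final} rather than the initial states of the two protocols that must be fixed.

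First I would invoke the reverse-protocol construction of Section~\ref{sec:reversed}: given a sequence of forward BLOCC protocols realizing $\ket{\Psi_N}\to\ket{\Phi_N}$ with associated matrix $Q(i,x|j,x')$, one obtains a sequence of reverse protocols with matrix $Q^{\textrm{rev}}(j,x'|i,x)$ satisfying Eq.~\eqref{eq:precrooks}, which Eq.~\eqref{eq:qtoprev} rewrites as $Q^{\textrm{rev}}(j,x'|i,x)=\sum_w P^{\textrm{rev}}(j,-w|i)\,\delta_{x'-x,\,w/\delta w}$ in the ideal-battery limit. The decisive observation is that the prefactor $u^{x'}(u-1)^{n-x'}/\big(u^{x}(u-1)^{n-x}\big)$ appearing in Eq.~\eqref{eq:precrooks} equals $\big(u/(u-1)\big)^{x'-x}$, and since Eq.~\eqref{eq:qpw} forces $x'-x=w/\delta w$ with $\delta w=\log\big(u/(u-1)\big)$, this is precisely $2^{w}$. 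Hence $P^{\textrm{rev}}(j,-w|i)=2^{w}P(i,w|j)$, which is compatible with the normalizations $\sum_{j,w}P^{\textrm{rev}}(j,-w|i)=1$ and $\sum_{i,w}P^{\textrm{rev}}(j,-w|i)2^{-w}=1$ already derived from Eqs.~\eqref{eq:normalisation} and~\eqref{eq:w-stochasticity}.

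Next I would specialize both protocols to maximally entangled final states, so that $q_j=1/d$ for the forward process and $q'_i=1/d'$ for the reverse one, giving the marginal work distributions $P(w)=\tfrac1d\sum_{i,j}P(i,w|j)$ and $P^{\textrm{rev}}(-w)=\tfrac1{d'}\sum_{i,j}P^{\textrm{rev}}(j,-w|i)$. Substituting $P^{\textrm{rev}}(j,-w|i)=2^{w}P(i,w|j)$ into the second expression yields $P^{\textrm{rev}}(-w)=2^{w}\tfrac{d}{d'}\cdot\tfrac1d\sum_{i,j}P(i,w|j)=2^{w}\tfrac{d}{d'}P(w)$, which rearranges to $P(w)/P^{\textrm{rev}}(-w)=2^{-w}d'/d$. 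The state $\ket{\psi'}$ is read off from Eq.~\eqref{eq:initialstaterev}: in the large-battery limit its Schmidt coefficients are $p'_j=\sum_w p'_{j,-w}$, and $\sum_{j,w}P^{\textrm{rev}}(j,-w|i)=1$ guarantees these sum to one, so $\ket{\psi'}$ is a genuine pure state.

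I expect the real obstacle not to lie in this closing chain of identities, which is essentially bookkeeping once Eq.~\eqref{eq:precrooks} is available, but in the two ingredients it relies on, both handled in Sections~\ref{sec:reversed} and~\ref{sec:crooks2}: first, showing that the reverse measurement operators built from the polar decomposition $M_m\sqrt{\Psi_{AA'}}=\sqrt{r_m}\sqrt{\Phi_{AA'}}V_m$ form a valid POVM and, together with Bob's conjugated unitaries $V_m^\dagger$, send $\ket{\Psi'}$ to a genuinely \emph{pure} product state $\ket{\Phi'}$; and second, controlling the battery-support windows (the replacement of width $N$ by $N'=N-2|w_{\max}/\delta w|$) so that $x'$ never exits the range in which Eq.~\eqref{eq:qtoprev} holds and the shift approximations $\alpha_{x''}\approx\alpha_{x''+w/\delta w}$ become exact as the battery grows --- the same finite-battery control exploited in Section~\ref{ss:uniform}.
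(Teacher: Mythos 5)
Your proposal is correct and follows essentially the same route as the paper: it extracts the pointwise relation $P^{\textrm{rev}}(j,-w|i)=2^{w}P(i,w|j)$ from Eq.~\eqref{eq:precrooks} via the identification of the prefactor with $2^{w}$ on the support $x'-x=w/\delta w$, then specializes both final states to maximally entangled ones and marginalizes, exactly as in Section~\ref{sec:crooks2}. You also correctly flag the two genuine technical burdens (validity of the reverse POVM and the $N\to N'$ battery-window control), which the paper discharges in Sections~\ref{sec:reversed} and~\ref{sec:crooks2}.
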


\section{Probabilistic BLOCC transformations and full reversibility of entanglement dilution and concentration}
\label{sec:ensembles}

We previously considered pure state transformations given by Equation \eqref{eq:puretransbat} where the final state is close to $\ket{\phi}_{AB}\otimes \ket{\bat'}_{A'B'}$. We can generalise our results to the case when the final state is given by an ensemble of pure states close to $\ket{\phi^t}_{AB}\otimes \ket{\bat^t}_{A'B'}$ each with probability $p_t$. Or, to put it in a way which holds in the idea limit, where the amount of entanglement transferred to the battery is a classical random variable $p_{w|t}$ occurring with probability $p_t$ and the final system states are an ensemble
$\ket{\phi^t}_{AB}$.

Let us call transformations with such an ensemble of pure states, Probabilistic BLOCC.
We want to show that Equations \eqref{eq:normalisation}-\eqref{eq:doesthetrick} are still necessary and sufficient, even if the final battery state is allowed to be an ensemble of pure states. This allows us to apply our results to well known examples such as entanglement concentration, where one does not have an entanglement battery, and instead one distills some random number of ebits peaked around $nS(\psi_A)$ from $n$ copies of $\ket{\psi}_{AB}$ \cite{BBPS1996}. Then the case of entanglement concentration corresponds to setting all the $\ket{\bat^t}_{A'B'}$  equal to the initial state $\ket{\bat}_{A'B'}$ (so the battery is not used in the transformation), and by considering the final system to be some number of maximally entangled states $\ket{\phi^t}_{AB}=\ket{e_t}_{AB}$. One can then transfer these ebits into the battery so that the final state of the system is in a product state, and the battery is in $\Delta^t\ket{\bat}_{A'B'}$ with probability $p_t$.

To see that \eqref{eq:normalisation}-\eqref{eq:doesthetrick} are still necessary and sufficient conditions, we can use a result which follows from \cite{JonathanP}:
\begin{lemma} Consider an ensemble of pure states $\ket{\phi_t}$ occurring with probability $p_t$ which can be written in a Schmidt basis as $\ket{\phi_t}=\sum_j \sqrt{q_{j|t}}\ket{jj}_{AB}$. Consider the {\it average target state}, $\ket{\bar{\phi}}=\sum_j \sqrt{q_{j}}\ket{jj}_{AB}$ where $q_j=\sum_t p_t q_{j|t}$. Then it is possible to transform an initial state $\ket{\psi}_{AB}$ to the ensemble  $\{ \ket{\phi_t},p_t\}$ under LOCC iff it's possible to transform  $\ket{\psi}_{AB}$ into $\ket{\bar{\phi}}$.
\label{lemma-JP}
\end{lemma}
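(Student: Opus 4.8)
The plan is to reduce both directions of the equivalence to Nielsen's majorization theorem~\cite{Nielsen-pure-entanglement}: one of them directly, and the other through a Jonathan--Plenio-style criterion for LOCC transformations into an ensemble~\cite{JonathanP}. First I would fix the convention that each $\ket{\phi_t}$ is written in a Schmidt basis with coefficients $q_{1|t}\ge q_{2|t}\ge\cdots$ in non-increasing order; then $q_j=\sum_t p_t q_{j|t}$ is automatically non-increasing, being a convex combination of non-increasing sequences, so it is exactly the sorted Schmidt vector of $\ket{\bar\phi}$. By Nielsen's theorem this makes the statement that $\ket\psi$ converts to $\ket{\bar\phi}$ under LOCC equivalent to the majorization relation $\sum_{j\le k}p_j\le\sum_{j\le k}q_j$ for all $k$, where $p_1\ge p_2\ge\cdots$ are the Schmidt coefficients of $\ket\psi$. (If the $\ket{\phi_t}$ do not already share a Schmidt basis, one first rotates each into a common computational basis by local unitaries, which does not change LOCC convertibility; this is implicit in the statement.)

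For the sufficiency direction I would show that $\ket\psi\to\ket{\bar\phi}$ implies $\ket{\bar\phi}\to\{p_t,\ket{\phi_t}\}$ and then compose LOCC maps. The transformation $\ket{\bar\phi}\to\{p_t,\ket{\phi_t}\}$ I would realize explicitly: Alice applies the POVM with elements $M_t=\sqrt{p_t}\,\diag{\sqrt{q_{j|t}/q_j}}$ in the Schmidt basis of $\ket{\bar\phi}$ (setting $q_{j|t}/q_j:=0$ whenever $q_j=0$) and announces the outcome $t$. Then $(M_t\otimes\mathbb{I})\ket{\bar\phi}=\sqrt{p_t}\sum_j\sqrt{q_{j|t}}\ket{jj}=\sqrt{p_t}\ket{\phi_t}$, so outcome $t$ occurs with probability $p_t$ and leaves the state exactly $\ket{\phi_t}$, while the completeness relation $\sum_t M_t^\dagger M_t=\mathbb{I}$ is precisely the identity $\sum_t p_t q_{j|t}=q_j$; no communication beyond broadcasting $t$ is needed.

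For the necessity direction I would use the entanglement monotones $E_k(\xi):=1-\sum_{j\le k}\lambda^\downarrow_j(\xi)$, which are LOCC monotones~\cite{vidal1999entanglement}. Given any LOCC protocol that produces $\ket{\phi_t}$ with probability $p_t$, the fact that each $E_k$ cannot increase on average under LOCC yields $E_k(\psi)\ge\sum_t p_t E_k(\phi_t)$, that is $\sum_{j\le k}p_j\le\sum_t p_t\sum_{j\le k}q_{j|t}=\sum_{j\le k}q_j$ for every $k$; hence $\ket\psi$ converts to $\ket{\bar\phi}$ by Nielsen's theorem. Equivalently, one can invoke the Jonathan--Plenio ensemble criterion directly, namely that $\ket\psi\to\{p_t,\ket{\phi_t}\}$ is possible if and only if $(p_j)\prec\sum_t p_t\lambda^\downarrow(\phi_t)$, which equals $(q_j)$, collapsing both directions to a single comparison with Nielsen's condition.

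I do not expect a genuinely hard step here: the real content is the input from~\cite{JonathanP,vidal1999entanglement} --- the averaged monotonicity of the Nielsen partial sums, equivalently the ensemble-transformation theorem --- and the rest is bookkeeping. The one point that needs care is the ordering and basis convention: $\ket{\bar\phi}$ must be built from the \emph{sorted} Schmidt vectors so that $\lambda^\downarrow(\bar\phi)=\sum_t p_t\lambda^\downarrow(\phi_t)$, and one must use that transforming $\ket\psi$ to the ensemble means the classical label $t$ is recorded, so that both parties learn it, which is what licenses the single-measurement construction in the sufficiency direction.
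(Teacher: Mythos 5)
Your proof is correct, and it is worth noting how it relates to what the paper actually does: the paper offers no independent argument for this lemma at all --- it simply states it as ``a result which follows from'' Jonathan--Plenio \cite{JonathanP} --- whereas you supply the underlying derivation. Your two halves are exactly the standard proof of that ensemble criterion: for sufficiency, the flagging POVM $M_t=\sqrt{p_t}\,\mathrm{diag}\bigl(\sqrt{q_{j|t}/q_j}\bigr)$ applied to $\ket{\bar\phi}$, with completeness following from $\sum_t p_t q_{j|t}=q_j$, followed by composition with the LOCC map $\ket\psi\to\ket{\bar\phi}$; for necessity, the average non-increase of the Vidal partial-sum monotones $E_k$, which gives $\sum_{j\le k}p_j\le\sum_t p_t\sum_{j\le k}q_{j|t}=\sum_{j\le k}q_j$ and hence Nielsen's condition for $\ket\psi\to\ket{\bar\phi}$. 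What your route buys is self-containedness and, importantly, it makes explicit the ordering/alignment convention that the lemma tacitly assumes: the ``iff'' only holds when each $\ket{\phi_t}$ is written with its Schmidt coefficients sorted in the common basis, so that $\lambda^\downarrow(\bar\phi)=\sum_t p_t\lambda^\downarrow(\phi_t)$; with a misaligned average state only the direction ``average possible $\Rightarrow$ ensemble possible'' survives, since the misaligned average is majorized by the aligned one. Two trivial points to tidy up: on the kernel of $\bar\phi_A$ (where $q_j=0$, hence $q_{j|t}=0$ for all $t$ with $p_t>0$) your operators $M_t$ sum to a projector rather than the identity, so one should append the complementary projector as an extra POVM element, which never fires on $\ket{\bar\phi}$; and in the necessity step one should note that the partial sums $\sum_{j\le k}q_j$ of the averaged sorted vectors are indeed the $k$ largest entries of $(q_j)$, which holds because a convex combination of non-increasing sequences is non-increasing --- a point you do state.
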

Note that because all states with the same Schmidt coefficients are equivalent under LOCC, we can write $\ket{\phi_t}$ using the same Schmidt basis without loss of generality.

Now, to see that Equations \eqref{eq:normalisation}-\eqref{eq:doesthetrick} are necessary under Probabilistic BLOCC, consider an ensemble of possible work and final state distributions, which we can characterize by the probability distributions $\{ P(w,j|i,t),p_t\}$, with 
the ensemble of final states arbitrarily close to a pure state ensemble which we denote by $\{p_t,\ket{\phi^t}\otimes\ket{\bat^t}\}$, each with Schmidt coefficients $\sqrt{q_{jx'|t}}\, [(u-1)/u]^{x'/2}$. 
(In the rest of this section we omit the factor $[(u-1)/u]^{x'/2}$.)
We can write the process map which takes $p_{xi}$ to $q_{x'j|t}$ as $P(j,x'|i,x,t) = P(w= [x'-x]\delta w ,j|i,t)$.
Then, due to Lemma \ref{lemma-JP}, we know that one can transform the initial state into this ensemble, only if one can transform it into the  average target state with Schmidt coefficients  $\sqrt{\sum_tq_{jx'|t}p_t}$.
Since we can take $\sum_t q_{jx'|t}p_t=q_{jx'}$, the transformation into the ensemble can only be accomplished if we can transform into the pure state with Schmidt coefficients $\sqrt{q_{jx'}}$ of the average state. The necessary conditions for pure state transformations, Equations \eqref{eq:normalisation}-\eqref{eq:doesthetrick} then apply.

To see that  \eqref{eq:normalisation}-\eqref{eq:doesthetrick} are sufficient, we want that given any $P(w,j|i)$ that satisfies them, one can transform the initial state into any
ensemble given by the process map $P(w,j|i,t)$ as long as $P(w,j|i)=\sum_t P(w,j|i,t)p_t$. Now, every
 $P(w,j|i,t)$ can be considered as a process which takes the initial state with
 Schmidt coefficients $p_{ix}$ to a final state with Schmidt coefficients $q_{jx'|t}$, which we can convert to a $P(x',j|i,x,t)$ as was done in Equation \eqref{eq:pix}. We can then see that $P(x',j|i,x)=\sum_t P(x',j|i,x,t)p_t$, and that
 $q_{jx'}=\sum_t q_{jx'|t}p_t$ gives the Schmidt coefficients for the average state
 corresponding to the final ensemble given by $q_{jx'|t}$. Lemma \ref{lemma-JP} then guarantees that we can create this ensemble.

Now, in known dilution protocols, the amount of entanglement does not fluctuate, while the original entanglement concentration~\cite{BBPS1996} protocol, is one where the final amount of entanglement does fluctuate, but probabilistically rather than coherently. Nonetheless, due to the results in this section, Result \ref{result:bat-stochastic} still holds. In the protocol, of concentration, one starts with $n$ copies of
\begin{align}
\ket{\psi}_{AB} = \sqrt{p}\ket{00}_{AB}+\sqrt{1-p}\ket{11}_{AB}
\end{align}
and we want to concentrate them into $t$ copies
\begin{align}
\ket{\phi}_{AB} = \frac{1}{\sqrt{2}}\left(\ket{00}_{AB}
+\ket{11}_{AB}\right).
\end{align}
This can be done by having Alice perform a measurement onto projectors $P_t=\sum_{v\in t}\proj{v}$ where $v$ are strings in the Schmidt basis and $t$ is the set of all strings which have $t$ 1's (called the typecast). Since all strings which have the same typecast have equal probability, this projects the state into a maximally entangled one. The amount of entanglement gained $t$ is $nS(\psi_A)$ on average but has to satisfy the fluctuation theorems presented here.  And in fact, as we have shown, all other concentration schemes must also.

For entanglement dilution, existing protocols are not optimal, but using an entanglement battery, not only can one perform dilution on the single copy level, but also, the total yield can be improved and made optimal. Take for example, the teleportation protocol of \cite{BBPS1996}, where Alice performs Schumacher compression \cite{schumacher1996quantum} on her half of the $n$ copies of her local state so that it sits on only $k=nS(\psi_A)+O(\sqrt{n})$ qubits. She then teleports her state to Bob, using $k$ ebits. While the average number of ebits consumed is equal to $S(\psi_A)$, the total number requires an additional amount of order $\sqrt{n}$. This is due to the compression step, which although asymptotically efficient in terms of an average rate ($k/n$), wastes order $\sqrt{n}$ ebits. Likewise, the more sophisticated protocol of \cite{Lo-Popescu1999} also relies on compression and teleportation, using up $O(\sqrt{n})$ more ebits than strictly needed.  Dilution and concentration are thus not strictly reversible, since in concentration, the average is peaked around $S(\psi_A)$ but the amount fluctuates and can be both more or less than the average. However, using the entanglement battery, entanglement dilution can be performed in such a way that it is reversible, not only in the sense that the average amount of ebits consumed/produced do not differ by $O(1/\sqrt{n})$, but also in the sense that all the moments and the fluctuations are the same (as can be seen by applying Result \ref{res:rev}) which implies full reversibility.

\section{An estimate of the battery size}
\label{sec:finitebattery}

Here we give an example of a transition for which only a small battery is needed. For that we need the following lemma.

\begin{lemma}
	In a BLOCC protocol with a battery of finite size $N$, we have that
	\begin{equation}
	F(\tr_{A'B'}[\proj{\Phi_N}],\tr_{A'B'}[\tilde \sigma]) \ge \frac{1}{1+
		\frac{2  a^u_{\text{max}}}{N+1}},
	\end{equation}
	where $F(\rho,\sigma)$ is the quantum fidelity. The state $\tilde \sigma$ is defined as the output of the protocol when an initial product state between system and battery $\ket{\tilde \Psi_N}$ is the input (as defined in Eq. \eqref{eq:tpsin}) and $a^u_{\text{max}}$ is defined in Eq. \eqref{eq:maxau}. We recall that $\ket{\Phi_N}$ is a final state of the finite-sized LOCC transition as in Eq.   \eqref{eq:finalN}.
	 The marginals are over the system Hilbert space $AB$.
\end{lemma}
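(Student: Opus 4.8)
The plan is to obtain the bound by applying the monotonicity of fidelity under CPTP maps twice, bootstrapping from the pure-state overlap estimate already proved in Eq.~\eqref{eq:finitebound}.

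First I would rewrite that overlap bound in terms of $N$ rather than $n$: since $n+1 = N+1+2a^u_{\text{max}}$, Eq.~\eqref{eq:finitebound} reads $\braket{\tilde\Psi_N}{\Psi_N} \ge (N+1)/(N+1+2a^u_{\text{max}}) = (1+2a^u_{\text{max}}/(N+1))^{-1}$, and because both $\ket{\Psi_N}$ and $\ket{\tilde\Psi_N}$ are pure this quantity is exactly $F(\proj{\Psi_N},\proj{\tilde\Psi_N})$ in the root-fidelity convention (the same normalization used in Section~\ref{ss:uniform}).

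Next I would invoke that $\Lambda_N$, the protocol constructed in Section~\ref{sec:constructionlambda} from a doubly stochastic matrix via Nielsen's theorem, is a deterministic LOCC map and hence CPTP; it sends $\proj{\Psi_N}$ to $\proj{\Phi_N}$ and, by the definition of $\tilde\sigma$ given in the statement, sends $\proj{\tilde\Psi_N}$ to $\tilde\sigma$. Data processing for the fidelity then gives $F(\proj{\Phi_N},\tilde\sigma) \ge F(\proj{\Psi_N},\proj{\tilde\Psi_N})$. A second application of data processing, this time to the CPTP map $\tr_{A'B'}$, reduces both arguments to their $AB$-marginals while preserving the inequality, which yields the stated bound.

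I do not expect a genuine obstacle here; the only things to check are conventional or bookkeeping in nature: that the fidelity is normalized as $F(\proj\psi,\proj\phi)=|\braket{\psi}{\phi}|$ (so that the bound is an inequality on fidelities, not on their squares), and the at-first-sight slightly odd point that $\Lambda_N$ was tailored to act ideally on the correlated state $\ket{\Psi_N}$ yet is being fed the product state $\ket{\tilde\Psi_N}$. Since $\Lambda_N$ is nonetheless a single fixed channel, comparing its two outputs via monotonicity is entirely legitimate — this is precisely the remark already made just below Eq.~\eqref{eq:finitebound}.
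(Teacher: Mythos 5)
Your proof is correct and follows essentially the same route as the paper's: it starts from the overlap bound of Eq.~\eqref{eq:finitebound}, applies monotonicity of the (root) fidelity under the CPTP map $\Lambda_N$, and then again under the partial trace $\tr_{A'B'}$. The only cosmetic difference is that you make explicit the identity $n+1 = N+1+2a^u_{\text{max}}$, which the paper uses implicitly when writing the bound as $\frac{N+1}{N+1+2a^u_{\text{max}}}$.
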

\begin{proof}
Starting from Eq. \eqref{eq:finitebound}, we have that the initial states $\ket{\Psi_N}$ and $\ket{\tilde \Psi_N}$ are close by
\begin{equation}
|\langle \tilde \Psi_N | \Psi_N \rangle |\ge \frac{1}{1+
	\frac{2 a^u_{\text{max}}}{N+1}} 
\end{equation}
Given the monotonicity property of the fidelity under CPTP maps, this quantity lower bounds that of the final states
$\Lambda( \proj{\Psi_N})=\proj{\Phi_N}$ and $\Lambda (\tilde \Psi_N)=\tilde \sigma$
\begin{equation}
F(\proj{\Phi_N},\tilde \sigma)=\sqrt{\bra{\Phi_N} \tilde \sigma \ket{\Phi_N}} \ge |\langle \tilde \Psi_N | \Psi_N \rangle |
\end{equation}
The fidelity between states can also only increase if we trace out the battery and focus on the system only
\begin{equation}
F(\tr_{A'B'}[\proj{\Phi_N}],\tr_{A'B'}[\tilde \sigma]) \ge F(\proj{\Phi_N},\tilde \sigma),
\end{equation}
 thus we get to the expression
\begin{equation}
F(\tr_{A'B'}[\proj{\Phi_N}],\tr_{A'B'}[\tilde \sigma]) \ge \frac{1}{1+
	\frac{2  a^u_{\text{max}}}{N+1}}.
\end{equation}
\end{proof}

The inequality in this lemma allows us to estimate how far we are from the ideal infinite battery case, in which all the different initial and final states coincide.

That is, to have a high fidelity, of at least $1-\epsilon$, the tradeoff between the size of the battery and the desired accuracy is
\begin{equation}
\frac{1}{1+
	\frac{2  a^u_{\text{max}}}{N+1}} \ge 1-\epsilon .
\end{equation}
We see that the parameter $a^u_{\text{max}}$ fixes the trade-off between size and fidelity. It is defined in Eq. \eqref{eq:maxau}, and is used to be able to approximate arbitrary values of $w$, so there is not a general upper bound for it. However, good particular choices of $w$, such as multiples of $\log{\frac{u}{u-1}}$ with $u$ integer, yield low $a^u_{\text{max}}$, and hence a good trade-off too. 

As a simple example, let us take a reversible transformation, in which $w\equiv w_{ij}=\log{\frac{p_i}{q_j}}$. In such processes $w_{\text{max}} = \max{\{D_\infty(p |q ), D_\infty (q|p) \}}$, where we define the Renyi-$\infty$ divergence as $D_\infty (p |q):=\log \sup_i (\frac{p_i}{q_i})$.
We choose an initial state with coefficients
$p=(1/2,1/4,1/8,1/8)$ and a final state with $q=(1/4,1/4,1/4,1/4)$, for which we have $w_{\text{max}} =\log 2$, and a choice of $u=2$ yields $ a^u_{\text{max}}=1$. 

Thus the states are
\begin{align}
\ket{\Psi_N}&= \sqrt{\frac{1}{2}}\ket{00} \otimes \sum_{x=\frac{n-N}{2}}^{\frac{n+N}{2}}\frac{1}{N+1} \ket{e_{x-1}}+\sqrt{\frac{1}{4}}\ket{11} \otimes \sum_{x=\frac{n-N}{2}}^{\frac{n+N}{2}}\frac{1}{N+1} \ket{e_{x}}
\\ &\,\,\,\,\,\,\,\,\,\,\,\,\,\,\,\,\,\,\,\,\,\,\,\,\,\,\,\,\,\,\,\,\,\,\,\,\,\,\,\,\,\,\,\,\,\,\,\,\,\,\,\,+\sqrt{\frac{1}{8}}(\ket{22}+\ket{33}) \otimes \sum_{x=\frac{n-N}{2}}^{\frac{n+N}{2}} \frac{1}{N+1} \ket{e_{x+1}}
\\
\ket{\Phi_N}&=\sqrt{\frac{1}{4}}(\ket{00}+\ket{11}+\ket{22}+\ket{33})\otimes \sum_{x'=\frac{n-N}{2}}^{\frac{n+N}{2}}\sqrt{\frac{1}{N+1}} \ket{e_{x'}}
\end{align}

For this case, to achieve a fidelity of at least $1-\epsilon = 0.85$, it is sufficient to take $N \ge 11$, and hence a battery consisting of $n \equiv N+2  a^u_{\text{max}}\ge 13$ systems. We also note that the dimension of the Hilbert space of the individual systems of the battery is $2u-1$, so in this case the Hilbert space of each of these is $\mathbb{C}^3\otimes \mathbb{C}^3$.

\end{document}